\begin{document}

\title{Inconsistent Surface Registration via Optimization of Mapping Distortions}

\titlerunning{Inconsistent Surface Registration}        % if too long for running head

\author{Di Qiu and Lok Ming Lui}

%First Author         \and
%        Second Author %etc.
%}

%\authorrunning{Short form of author list} % if too long for running head

\institute{Qiu Di and Lok Ming Lui \at
              Department of Mathematics, The Chinese University of Hong Kong, Shatin, New Territories, Hong Kong  \\
%              Tel.: +123-45-678910\\
%              Fax: +123-45-678910\\
%              \email{lmlui@example.com}           %  \\
%             \emph{Present address:} of F. Author  %  if needed
}
\date{Received: date / Accepted: date}
% The correct dates will be entered by the editor

\maketitle

\begin{abstract}
We address the problem of registering two surfaces, of which a natural bijection between them does not exist. More precisely, only a partial subset of the source surface is assumed to be in correspondence with a subset of the target surface. We call such a problem an {\it inconsistent surface registration (ISR)} problem. This problem is challenging as the corresponding regions on each surface and a meaningful bijection between them have to be simultaneously determined. In this paper, we propose a variational model to solve the ISR problem by minimizing mapping distortions. Mapping distortions are described by the Beltrami coefficient as well as the differential of the mapping. Registration is then guided by feature landmarks and/or intensities, such as curvatures, defined on each surface. The key idea of the approach is to control angle and scale distortions via quasiconformal theory as well as minimizing landmark and/or intensity mismatch. A splitting method is proposed to iteratively search for the optimal corresponding regions as well as the optimal bijection between them. Bijectivity of the mapping is easily enforced by a thresholding of the Beltrami coefficient. We test the proposed method on both synthetic and real examples. Experimental results demonstrate the efficacy of our proposed model.
\end{abstract}
\keywords{Inconsistent surface registration; mapping distortions; Beltrami coefficient; mapping optimization problem; quasiconformal theories}

\section{Introduction}
Surface registration aims to find meaningful point-wise correspondence between two surfaces embedded in $
\mathbb{R}^3$. It has important applications in various fields, such as in computer graphics, computer vision and medical imaging. For instances, the surface registration problem in computer vision and graphics aims to find point-wise correspondence in order to perform shape analysis, relational learning, to transfer motions, textures between shapes; in medical imaging, it is necessary to find one-to-one point-wise correspondence between the target and the template anatomical surfaces so that the data defined on the surfaces can be compared meaningfully. Very often, a desirable registration map should be much more complex than a global rigid or affine motion.

Due to its importance, various registration models have been proposed. Existing approaches usually assume a global bijection between the two surfaces to be registered if the surfaces are closed without boundary. For the registration between open domains, prescribed boundary condition is imposed. In a practical situation, a natural bijection between two shapes may not exist. Usually, only a subset $\Omega_1$ of the source surface $S_1$ is in correspondence with a subset $\Omega_2$ of the target surface $S_2$. This problem arises in many real situations, such as surface stitching, surface matching of incomplete anatomical structures and so on. We refer to this kind of registration problem as the {\it inconsistent surface registration (ISR)} problem, and $S_1, S_2$ to be an {\it inconsistent pair} of surfaces. Importantly, we assume the registration map from our ISR problems satisfy the {\it maximality property}: 
\begin{equation} \label{eq:maximal}
f(\Omega_1) = \Omega_2 = f(S_1) \cap S_2,
\end{equation}
where the registration $f:\Omega_1 \to \Omega_2$ is extended to a deformation $f:S_1 \to \mathbb{R}^3$ with abusing the notation.
This property means that our corresponding regions can be obtained from deforming the source surface and taking the intersection with the target surface. Thus it excludes the scenarios where the correspondence can only be defined in multiple, mutually disconnected regions. 

To solve the problem, the corresponding regions on each surface as well as a meaningful bijection between them have to be simultaneously found. Mathematically, this problem can be formulated as follows. Given two surfaces $S_1$ and $S_2$ to be registered, we look for optimal subsets $\Omega_1^* \subset S_1$ and $\Omega_2^* \subset S_2$, as well as an optimal registration $f$, such that when restricted to $\Omega_1^*$, $f:\Omega_1^*\to \Omega_2^*$ is a bijection that satisfies the prescribed mapping constraints. The mapping constraints are often given by feature landmarks and intensities, such as surface curvatures, defined on each surface. The ISR problem can then be described as the following optimization problem:
\begin{equation}\label{optimizationISR}
	(\Omega_1^*, f^*) = {\bf argmin}_{\Omega_1, f:\Omega_1\to \Omega_2} \{E_{fid}(\Omega_1,f) + E_{reg}(f)\}
\end{equation}
\noindent where $E_{fid}$ denotes the data fidelity energy, which is usually an $L^2$ loss on the intensity differences; $E_{reg}$ denotes the regularization term, which enforces the mapping $f$ to be smooth and bijective. Further constraints, such as landmark constraints, may be imposed on the solution $f$. Note that this problem is different from the conventional registration problem, since apart from the mapping problem, one also needs to find the optimal region $\Omega_1^*$. 

In this paper, we propose a variational model to solve the above ISR problem through minimizing mapping distortions. We capture the mapping distortions by a geometric quantity, called the {\it Beltrami coefficient}, from quasi-conformal theory, together with the singular values of the differential of the mapping. The main idea of the proposed model is to control the angle and scale distortions, allow the matching domain to evolve in a way that minimizes the landmark and intensity mismatching. A splitting method is proposed to iteratively find the optimal corresponding regions on each surface and the optimal bijection between them. The incorporation of Beltrami coefficient in our model allows us to conveniently enforce the local bijectivity and smoothness of the mapping. More specifically, the local bijectivity of the mapping can be easily enforced by a thresholding of the Beltrami coefficient, and the smoothness of the mapping can be translated to the smoothness of the coefficient. Numerous experiments have been carried out on both synthetic and real data, which demonstrate the effectiveness of the proposed model to solve the ISR problem.

The contributions of this paper are three-fold.
\begin{enumerate}
	\item Firstly, we propose to formulate the ISR problem as an optimization problem over the spaces of sub-regions on each surface and bijective mappings between them. The regularization of the mapping is based on the differential and the Beltrami coefficient of the mapping from quasiconformal theory, with which the bijectivity and smoothness of the mapping can be easily enforced. 
	\item Secondly, we propose an algorithm to obtain a free boundary deformation with controlled scale distortions. This algorithm is useful for solving the optimization problem for ISR in our formulation.
	\item Thirdly, a splitting method is proposed to solve the ISR optimization problem, which iteratively searches for the optimal corresponding regions as well as the optimal bijection between them. 
\end{enumerate}

The rest of the paper is organized as follows. In Section \ref{sec:2}, relevant previous works are reviewed. Some basic mathematical background are explained in Section \ref{sec:3}. Our proposed model is discussed in details in Section \ref{sec:4}. Details of the main algorithm to solve the proposed registration model are explained in Section \ref{sec:5}. Experimental results are shown in Section \ref{sec:5}. The paper is concluded in Section \ref{sec:7}.

\section{Related works} \label{sec:2}
Surface registration in a non-rigid, deformable setting has been an active and challenging area of research. Since the literature is vast, below we will mention works that are related to ours in terms of the problem setting as well as the techniques deployed, but definitely they will not consist a complete survey.

We approach the ISR problem in 2D parametrization domain. Explicitly, we use the conformal, also known as the intrinsic parametrizations \cite{pinkall1993computing,levy2002least,desbrun2002intrinsic,gu2003global} of the the surfaces, which faithfully preserve the local geometry of the surfaces. And crucially, we register two inconsistent surfaces via optimized deformation in the 2D parametrization domain. Using conformal parametrization is common in recent works on {\it globally bijective} surface registration possibly with large deformation. For example, for surfaces with prescribed boundary correspondence as in Lam {\it et al.} \cite{lam2014landmark}, for closed surfaces without boundaries as in \cite{choi2015flash,lui2014geometric}. Like ours, these works are based on landmark-intensity matching with geometric regularization, where the regularization term is formulated as the local {\it conformal} or {\it angular distortion} of the deformation map. We refer to  \cite{lam2014landmark} and references therein for further related work on the landmark-intensity hybrid registration. However, the local {\it scale distortion} of the deformation map was not considered by them, mainly because the registration domain are fixed by boundary constraints \cite{lam2014landmark} or in very special forms \cite{choi2015flash,lui2014geometric}, and thus cannot handle ISR problems. 
 
In our work we formulate the scale distortion in a uniform framework with the conformal distortion, namely the geometric properties of the mapping differential in quasiconformal theory \cite{astala2008elliptic}. In this respect, our work is also closely related to the As-Rigid-As-Possible mesh deformation paradigm \cite{sorkine2007rigid,igarashi2005rigid} and its extensions to allow large non-isometric deformations \cite{lipman2012bounded}. Under such a framework geometric smoothness and local bijectivity are non-trivial to achieve and recently many additional optimization methods, {\it e.g.} \cite{kovalsky2015large,rabinovich2017scalable} are proposed to solve this problem. However, we will show that with conformal parametrization in 2D, it is very easy to achieve geometric smoothness and local bijectivity using the quasiconformal methods. Indeed, our work gathers both the ideas from landmark-intensity registration and mesh deformation modelling to solve the ISR problem possibly with large deformation.

In case the ground truth registration map is a global rigid or affine motion, then the ISR problem can also be posed as a fusion problem between surfaces and can be solved by iterative closest point type of methods, please see \cite{rusinkiewicz2001efficient,li2008global,amberg2007optimal,bronstein2006generalized} and the references therein. Directly applying these methods to our ISR problem will not work, since these methods either fail to handle large deformation or the resulting deformation of the source surface only approximate the target surface, and thus does not yield and mapping between them. We illustrated these issues in Example 8 of Section 7.  On the other hand, the feature matching approach is often used in case the matching is allowed to be sparse or possible locally non-injective. In this regard, methods based on the functional correspondence framework \cite{rodola2017partial,ovsjanikov2016computing} have also been proposed. Differently, we note that our goal is to obtain a locally bijective mapping possibly with large deformation between sub-regions of the surfaces to be registered.  

\section{Preliminaries on quasiconformal deformations} \label{sec:3}
In this section, we introduce the basic mathematical concepts related to quasiconformal deformations. Since our discussion will be about local differential properties of the mapping, it suffices to consider planar domains in $\mathbb{R}^2 \cong \mathbb{C}$ of disk topology, equipped with the Euclidean metric $e$. 

Consider a diffeomorphism $f:X\to \mathbb{C}$ for a domain $X \subset \mathbb{C}$.  This mapping $f$ induces a pullback metric $f^*e$ on $X$, and this metric on $X$ reveals geometric properties of $f$.
Let us write $f^*e$ as a matrix field $H:X\to{\bf S}_{++}$ defined on $X$, where $\mathbf{S}_{++}$ denotes the space of symmetric positive definite matrices. Then $H$ and the differential $Df$ of the mapping $f$ satisfy the following nonlinear equation:
\begin{equation} \label{eq:nonlinear}
Df(z)^{T}Df(z)=H(z), \; z\in X
\end{equation}
For each $z\in X$, we further factorize $H(z)$ as a product of $\det(Df)(z)$ and a positive definite matrix {\it with unit determinant} $Q(z)$
\[
H(z)=\det(Df)(z)\cdot Q(z),
\]
Here, we have used the fact that $\det(Df)(z)>0$ since $f$ is assumed to be a diffeomorphism, and thus orientation preserving. This factorization signals a nice ``linearization" of the nonlinear equation \eqref{eq:nonlinear}, by multiplying on the left the inverse of $Df(z)^{T}$ on both sides of \eqref{eq:nonlinear}. After some algebraic operations, this will lead to the Beltrami equation
\begin{equation} \label{eq:beltrami}
\frac{\partial}{\partial\bar{z}}f(z)=\mu(z)\frac{\partial}{\partial z}f(z),
\end{equation}
where  $\mu=\frac{q_{11}-q_{22}+2iq_{12}}{q_{11}+q_{22}+2}$, $Q = (q_{ij})_{1\leq i,j\leq 2}$ and we use the complex notation $\frac{\partial f}{\partial\bar{z}}=(u_{x}-v_{y})/2+i(u_{y}+v_{x})/2$, $\frac{\partial f}{\partial z}=(u_{x}+v_{y})/2+i(-u_{y}+v_{x})/2$. It is easy to check that the local diffeomorphism condition is equivalent to $|\mu|<1$ \cite{astala2008elliptic}.
We call injective solutions to the Beltrami equation with Beltrami coefficient $\mu$ quasiconformal mappings.
A special case when $\mu=0$ on $X$, the equation becomes $\partial_{\bar{z}}f=0$ and is the well-known Cauchy-Riemann equation. In this case, the mapping $f$ is called a conformal mapping. Geometrically, the Beltrami coefficient encodes the infinitesimal angle distortion of the mapping. This can be visualized by the fact that quasiconformal mapping maps infinitesimal circles to ellipses. See Figure \ref{fig:qc1}.

\begin{figure}[t]
    \centering
    \includegraphics[height=4.5cm]{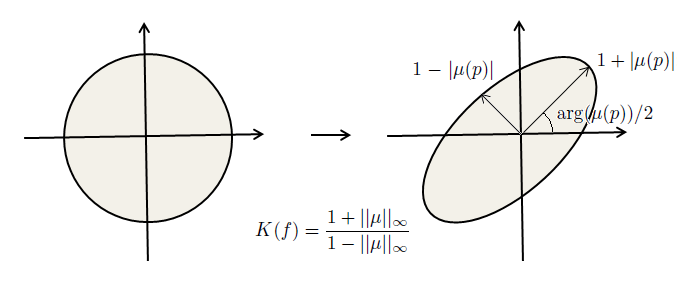}
    \caption{A quasiconformal mapping maps infinitesimal circles to ellipses. The local geometric distortion under the quasiconformal map can be measured by the Beltrami coefficient.}
    \label{fig:qc1}
\end{figure}
\begin{figure}[t]
    \centering
    \includegraphics[height=4.5cm]{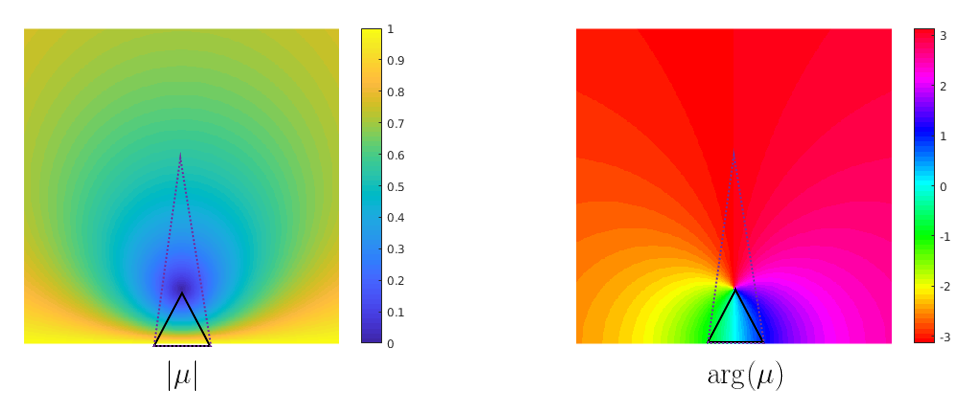}
    \caption{The mapping varies smoothly as the Beltrami coefficient varies smoothly. Here in the discrete case, we map the solid triangle to the dotted triangle, whose two vertices at the base are fixed. The third vertex of the dotted triangle varies in $\mathbb{C}$, which gives rise to various corresponding Beltrami coefficients $\mu = |\mu| e^{i \arg(\mu)}$. We visualize the magnitude and argument of the Beltrami coefficient using a colormap. Best viewed in color.}
    \label{fig:qc2}
\end{figure}

The strength of angle distortion can be measured by the condition number of the mapping differential $Df(z)$ 
\[
K(z)=\frac{1+|\mu(z)|}{1-|\mu(z)|}=\frac{\sigma_{1}(Df(z))}{\sigma_{2}(Df(z))},
\]
where $\sigma_{1}(Df)$ and $\sigma_{2}(Df)$ are the largest and smallest singular values of the mapping differential $Df$, which we call the {\it principal distortions}. One can also observe that as $|\mu|$ tends to $1$, the distortion blows up to infinity. Note that Beltrami coefficients do not control the local area/scale distortions, which are controlled by the Jacobian
\[\det(Df)=\sigma_{1}(Df)\cdot\sigma_{2}(Df),
\]
which is previously factored out from $H$.  We can control the principal distortions by requiring them bounded by some prescribed values $K_1\geq  \sigma_1 \geq \sigma_2 \geq K_2>0$. This means the scales of stretching in each principal direction are bounded by $K_1$ and $K_1$. Thus it can limit the area distortion of the mapping. As a result, the local area distortion as captured by $\text{det}(Df(z))$ is bounded by $K_1^2$ and $K_2^2$.  In particular, the local area is preserved if we set $K_1 = K_2 = 1$.

The angle and area distortions are coupled by a global integrability condition. To be more precise, given Beltrami coefficient $\mu$ with $\|\mu\|_\infty < 1$, the equation has a unique solution up to post-compositions of conformal mappings \cite{astala2008elliptic}. This is known as the following {\it measurable Riemann mapping theorem} \cite{astala2008elliptic}.
\begin{theorem} Let $\mu(z) \leq k < 1$ for every $z\in \mathbb{C}$. Then there is a solution $f$ to \eqref{eq:beltrami} which is a homeomorphism of $\mathbb{C}$. Furthermore, if we require $f(0) = 0, f(1) = 1, f(\infty) = \infty$, then the solution is unique.

\end{theorem}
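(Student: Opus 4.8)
The plan is to follow the classical Ahlfors--Bers route (developed in detail in \cite{astala2008elliptic}), reducing the statement to the invertibility of an operator built from the singular integral operators of the plane. I would dispose of uniqueness first: if $f_1,f_2$ are two normalized homeomorphic solutions of \eqref{eq:beltrami} with the same $\mu$, then $h=f_1\circ f_2^{-1}$ has, by the composition formula for Beltrami coefficients of quasiconformal maps, vanishing Beltrami coefficient, hence is $1$-quasiconformal; by Weyl's lemma $h$ is holomorphic, and an orientation-preserving homeomorphism of $\widehat{\mathbb{C}}$ which is holomorphic and fixes $0,1,\infty$ must be the identity. Thus all the content is in existence.

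For existence I would first assume $\mu$ is supported in a large disk and seek $f$ in the form $f(z)=z+(\mathcal{C}h)(z)$, where $\mathcal{C}$ is the solid Cauchy transform $(\mathcal{C}h)(z)=\frac{1}{\pi}\int_{\mathbb{C}}\frac{h(w)}{z-w}\,dA(w)$ and $h\in L^p$ is to be determined. Since $\partial_{\bar z}\mathcal{C}h=h$ and $\partial_{z}\mathcal{C}h=\mathcal{S}h$, where $\mathcal{S}$ is the Beurling transform, the Beltrami equation \eqref{eq:beltrami} becomes the fixed-point equation $(I-\mu\mathcal{S})h=\mu$. The crucial analytic input is that $\mathcal{S}$ is an isometry of $L^2(\mathbb{C})$ and, by Calder\'on--Zygmund theory and interpolation, is bounded on $L^p$ with $\norm{\mathcal{S}}_{L^p}\to 1$ as $p\to 2$; hence for $p$ close enough to $2$ we have $\norm{\mu\mathcal{S}}_{L^p}\le k\,\norm{\mathcal{S}}_{L^p}<1$, so $I-\mu\mathcal{S}$ is invertible via a Neumann series and $h=\sum_{n\ge 0}(\mu\mathcal{S})^{n}\mu\in L^p$. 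Then $f=z+\mathcal{C}h\in W^{1,p}_{\mathrm{loc}}$ solves \eqref{eq:beltrami} and $f(z)-z\to 0$ as $z\to\infty$.

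Next I would promote this Sobolev solution to a homeomorphism. A $W^{1,p}_{\mathrm{loc}}$ solution of \eqref{eq:beltrami} with $\norm{\mu}_\infty\le k<1$ is a quasiregular map; by Sto\"ilow factorization (openness and discreteness of quasiregular maps, together with positivity of the Jacobian a.e.) it factors as holomorphic $\circ$ homeomorphism, and the prescribed behaviour at $\infty$ together with properness forces $f$ itself to be a homeomorphism of $\widehat{\mathbb{C}}$; post-composing with a M\"obius transformation then achieves $f(0)=0$, $f(1)=1$, $f(\infty)=\infty$. Finally, to remove the compact-support hypothesis I would use the standard splitting: solve \eqref{eq:beltrami} for $\mu\chi_{\mathbb{D}}$ and, after the change of variable $z\mapsto 1/z$, for the part of $\mu$ supported outside $\mathbb{D}$, and then compose the two normalized solutions; the composition formula for Beltrami coefficients shows the result has Beltrami coefficient $\mu$ on all of $\mathbb{C}$.

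The main obstacle is the analytic heart of the argument: proving that the Beurling transform is an $L^2$ isometry and that its $L^p$ operator norm tends to $1$ as $p\to 2$ --- this is where one needs the Fourier-multiplier identity on $L^2$ together with Calder\'on--Zygmund/Marcinkiewicz (or Riesz--Thorin) interpolation around $p=2$. A secondary technical point is the regularity/Sto\"ilow step used to pass from a $W^{1,p}$ solution to a genuine homeomorphism, which rests on the topological theory of quasiregular mappings and is best quoted rather than reproved.
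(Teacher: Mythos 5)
The paper does not prove this theorem at all: it is quoted as classical background (the measurable Riemann mapping theorem) directly from \cite{astala2008elliptic}, so there is no in-paper argument to compare against. Your sketch is the standard Ahlfors--Bers/Astala--Iwaniec--Martin proof --- Neumann series for $I-\mu\mathcal{S}$ on $L^p$ with $p$ slightly above $2$ using $\|\mathcal{S}\|_{L^p}\to 1$, Sto\"ilow factorization to upgrade the $W^{1,p}_{\mathrm{loc}}$ solution to a homeomorphism, inversion $z\mapsto 1/z$ to remove the compact-support assumption, and Weyl's lemma for uniqueness --- and it is correct in outline; it is essentially the proof found in the reference the paper cites.
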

% Thus, a unique solution can be obtained, in particular, if the boundary value of the mapping is known, or it can be the {\it minimal condition} as in the Riemann mapping theorem, namely fixing both the mapping's value as well as its complex derivative at a point. 

Finally, we have a Schauder estimate \cite{astala2008elliptic} that relates the regularity of the Beltrami coefficients to the regularity of the associated mapping.
\begin{theorem} \label{thm:schauder}
Suppose that $f\in W^{1,2}_{loc}(\mathbb{C}, \mathbb{C})$ is the solution to \eqref{eq:beltrami}, where the Beltrami coefficient $\mu(z) \in  C^{l,\alpha}_{loc}(\mathbb{C}, \mathbb{C})$, $\|\mu\|_{\infty} <1$. Then $f \in C^{l+1,\alpha}_{loc}(\mathbb{C},\mathbb{C})$.
\end{theorem}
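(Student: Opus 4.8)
The plan is to reduce the claim to a regularity statement for a Cauchy--Beurling singular integral equation satisfied by $\partial_z f$, and then to bootstrap on $l$. First I would localize and normalize: fix $z_0$ and a disk $D \ni z_0$, and multiply $\mu$ by a cutoff equal to $1$ on a neighbourhood of $\overline{D}$ to obtain a compactly supported $\tilde\mu$ with $\|\tilde\mu\|_\infty \le \|\mu\|_\infty < 1$ and $\tilde\mu \in C^{l,\alpha}$. Let $h$ be the homeomorphic solution of $\partial_{\bar z}h = \tilde\mu\,\partial_z h$ furnished by the measurable Riemann mapping theorem, normalized so that $h(z) - z \to 0$ at $\infty$. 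By the Stoilow factorization \cite{astala2008elliptic}, on the neighbourhood where $\tilde\mu = \mu$ every $W^{1,2}_{loc}$ solution of \eqref{eq:beltrami} is of the form $F\circ h$ with $F$ holomorphic, hence $C^\infty$; since post-composition with a holomorphic map preserves $C^{k,\alpha}$ regularity, it suffices to show $h \in C^{l+1,\alpha}_{loc}$. I would also record the higher integrability $\partial h \in L^p_{loc}$ for some $p>2$ \cite{astala2008elliptic}, which makes the following manipulations legitimate.

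Writing $\omega := \partial_{\bar z}h = \tilde\mu\,\partial_z h$ (compactly supported) and using the Cauchy transform $P$ and the Beurling transform $S$ (so that $\partial_{\bar z}P = \mathrm{id}$, $\partial_z P = S$), one has $h(z) = z + P\omega$ and $\partial_z h = 1 + S\omega$, so $w := \partial_z h - 1$ satisfies
\[
(I - T)w = S\tilde\mu, \qquad T w := S(\tilde\mu\, w).
\]
Since $\|S\|_{L^2\to L^2} = 1$ and $\|\tilde\mu\|_\infty < 1$, we have $\|T\|_{L^p\to L^p} < 1$ for $p$ near $2$, so $I - T$ is invertible there and $w \in L^p_{loc}$. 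The analytic heart of the argument is the behaviour of $S$ on Hölder scales: for $0 < \alpha < 1$ and every integer $k \ge 0$, $S$ maps compactly supported $C^{k,\alpha}$ functions into $C^{k,\alpha}$ (Calder\'on--Zygmund theory for the homogeneous degree $-2$ kernel $-1/(\pi\zeta^2)$), while $P$ gains a derivative, mapping compactly supported $C^{k,\alpha}$ functions into $C^{k+1,\alpha}$. Combining these with cutoffs shows $I - T$ is invertible on the pertinent local Hölder spaces; by uniqueness of the $L^p$ solution the two solutions agree, so from $\tilde\mu \in C^{0,\alpha}$ we get $w \in C^{0,\alpha}_{loc}$, then $\omega = \tilde\mu(1+w) \in C^{0,\alpha}_{loc}$ with compact support, and finally $h = z + P\omega \in C^{1,\alpha}_{loc}$ --- the case $l = 0$.

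The general case follows by induction: granting $\partial_z h \in C^{l-1,\alpha}_{loc}$ and $\tilde\mu \in C^{l,\alpha}$, differentiating the equation $(I-T)w = S\tilde\mu$ (equivalently differentiating \eqref{eq:beltrami}) and reapplying the $C^{k,\alpha}$ bounds for $S$ promotes $w$ to $C^{l,\alpha}_{loc}$, hence $\omega = \tilde\mu\,\partial_z h \in C^{l,\alpha}_{loc}$ and $h = z + P\omega \in C^{l+1,\alpha}_{loc}$; unwinding $f = F\circ h$ near $z_0$ completes the proof. I expect the main obstacle to be precisely the base step: extracting the first increment of Hölder continuity of $\partial f$ from mere $L^p$-integrability rests on the non-elementary Hölder estimates for the Beurling transform and on organizing the invertibility of $I - T$ simultaneously on $L^p$ and on the Hölder scale --- with the compact supports and cutoffs handled carefully, since $\partial h$ tends to $1$, not $0$, at infinity --- so that the a priori weak solution can be identified with the smoother one. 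The higher steps are then a routine, if notation-heavy, induction. As an alternative that sidesteps singular integrals altogether, the real and imaginary parts of $f$ satisfy a divergence-form uniformly elliptic equation $\mathrm{div}(A\nabla u) = 0$ with $A = A(\mu)$ symmetric, of determinant one, and of the same local Hölder regularity as $\mu$, so the conclusion also follows from the classical Schauder estimates for such equations.
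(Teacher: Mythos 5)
The paper does not prove this statement at all: Theorem~\ref{thm:schauder} is quoted as a known Schauder estimate with a citation to Astala--Iwaniec--Martin, so there is no in-paper argument to compare yours against. Your sketch is, in essence, the standard proof from that reference: localization by a cutoff, Stoilow factorization to reduce to the principal solution $h$, the singular integral equation $(I-T)w=S\tilde\mu$ for $w=\partial_z h-1$, invertibility of $I-T$ on $L^p$ for $p$ near $2$, the H\"older mapping properties of the Beurling and Cauchy transforms, and induction on $l$. The one place where your outline asserts more than it justifies is the sentence ``Combining these with cutoffs shows $I-T$ is invertible on the pertinent local H\"older spaces'': boundedness of $S$ on $C^{k,\alpha}$ does not give $\|T\|_{C^{k,\alpha}}<1$, so the Neumann-series argument that works on $L^p$ does not transfer, and one needs an extra device --- e.g.\ the commutator $S(\tilde\mu w)-\tilde\mu\,Sw$, whose kernel $\bigl(\tilde\mu(\zeta)-\tilde\mu(z)\bigr)/(\zeta-z)^2$ is weakly singular of order $|\zeta-z|^{\alpha-2}$ and hence smoothing on compactly supported $L^p$ functions, or a Campanato-type coefficient-freezing argument on small disks --- to identify the a~priori $L^p$ solution with a H\"older one. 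You flag this yourself as the main obstacle, so it is an acknowledged thin spot in a sketch rather than a wrong step. Your closing alternative --- writing the real and imaginary parts of $f$ as $W^{1,2}_{loc}$ weak solutions of $\nabla\cdot(A\nabla u)=0$ with $A$ exactly the matrix the paper itself uses in \eqref{eq:divAgrad}, which inherits the $C^{l,\alpha}$ regularity of $\mu$ because $\|\mu\|_\infty<1$ keeps $1-|\mu|^2$ bounded below, and then invoking classical interior Schauder theory for divergence-form elliptic equations --- is complete as stated and is arguably the cleanest way to settle the base case $l=0$; the higher cases then follow by differentiating the equation, exactly as in your induction.
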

In the discrete scenario where the mapping maps a triangle to another, when the Beltrami coefficient varies smoothly, the shape of the mapped triangle also varies smoothly. We visualize the smoothness property in Figure \ref{fig:qc2}.

The above observations about quasiconformal mappings will lead us to a novel free boundary deformation model proposed in this paper, where we take both angle and area distortions into account.

\section{Proposed model} \label{sec:4}

In this section, we describe our proposed model for ISR in details. Our strategy is to using the intensity and landmark matching to guide the deformation to find the optimal bijection as well as finding the optimal corresponding regions on each surface. Usually, apriori intensity information on the surface, such as the curvature, will be provided. In some situations, corresponding feature landmarks on each surface can be delineated. One can formulate the energy as
\begin{equation}\label{Model1}
	E_{ISR}(\Omega_1,f) = E_{fid}(\Omega_1,f) + E_{reg}(f)
\end{equation}
where $\Omega_1 \subset S_1$, $E_{fid}$ is the fidelity term that guides the registration map according to the matching error of intensities on the corresponding regions, $E_{reg}$ is the regularization term for the mapping $f$ that enhances smoothness and reduces local geometric distortions under $f$. The optimization is also subject to the constraint that $f$ is in the {\it admissible set} set $\mathcal{A}$, which constrains the mapping $f$. $\mathcal{A}$ is often defined based on some prescribed requirements according to the problem, such as the landmark constraints, and in our case, bounds on singular values of the mapping differential $Df$. 

The above optimization problem can be simplified by conformally parametrizing $S_1$ and $S_2$ into $\mathbb{C}$. Let $\phi_1:X_1 \to S_1$ and $\phi_2:X_2 \to S_2$ be the global conformal parametrizations of $S_1$ and $S_2$ respectively. Denote $\Omega_2 = f(\Omega_1) $. The ISR problem can be reduced to finding the optimal corresponding regions $\phi_1^{-1}(\Omega_1^*)$ and $ \phi_2^{-1}(\Omega_2^*)$, as well as the optimal bijection $\phi_2^{-1}\circ f^*\circ\phi_1$ when restricted to the corresponding region. This is schematically shown in the following diagram.
\begin{equation}
\begin{tikzcd} \label{diagram}
  S_1 \supset \Omega_1^* \arrow[r,"f^*"]
&   \Omega_2^* \subset S_2 \\
 X_1 \supset \phi_1^{-1}(\Omega_1^*)   \arrow[u, "\phi_1"] \arrow[r,  "\phi_2^{-1} \circ f^* \circ \phi_1"]
& \phi_2^{-1}(\Omega_2^*) \subset  X_2  \arrow[u, "\phi_2"]
\end{tikzcd}
\end{equation}
As such, we will simply discuss the registration problem between two inconsistent 2D domains $X_1\subset \mathbb{C}$ and $X_2\subset \mathbb{C}$, and omit $\phi_1, \phi_2$ from now on. In the following section we will relax the formulation of $E_{ISR}$ to make it tractable to solve in practice.

\subsection{Choices of $E_{fid}$ and $E_{reg}$}
In this subsection, we discuss our choices of fidelity term $E_{fid}$, regularization term $E_{reg}$ and admissible set $\mathcal{A}$.
The fidelity term $E_{fid}(\Omega_1,f)$ guides the registration map $f:\Omega_1 \to \Omega_2$, according to the intensity matching on the corresponding regions $\Omega_1$ and $\Omega_2$, but involving $\Omega_1$ as a variable makes it difficult to solve. We thus consider the following relaxation of the fidelity term:
\begin{equation}\label{fidelity}
	E_{fid}(f) := \int_{f(\Omega_1)} (I_1\circ f^{-1} - I_2)^2 = \int_{f(X_1)\cap X_2} (I_1\circ f^{-1} - I_2)^2
	, 
\end{equation}
where we have extend the registration mapping $f:\Omega_1 \to \Omega_2$ to the deformation mapping $f:X_1 \to \mathbb{C}$  with abusing the notation, $I_1:S_1\to \mathbb{R}$ and $I_2:S_2\to \mathbb{R}$ are intensities defined on $S_1$ and $S_2$ respectively and $I_1\circ f^{-1}$ is the deformed image of $I_1$ under the deformation $f$. Here, the domain of integration $f(\Omega_1)$ is simplified by observing $f(\Omega_1)= \Omega_2 = f(X_1)\cap X_2$, which follows from our assumption of the maximality property \eqref{eq:maximal} of the registration. As such, the fidelity term can be simplified to be dependent on the deformation map $f:X_1 \to \mathbb{C}$ only. We thus aim to minimize the intensity mismatching under $f$ on the region $f(X_1) \cap X_2$.  It serves for two purposes. First, it guides the mapping $f$ by matching the intensities as well as possible, so that a meaningful point-wise correspondence between the two surfaces can be obtained. Second, it optimizes $ f(X_1)\cap X_2$ using the intensity information defined on each surface. Of course, the intensity matching term is equal to $0$ in the trivial case when $f(X_1)\cap X_2$ is an empty set. This is the case when the two surfaces are not in correspondence with each other at all. Thus, the regularization term $E_{reg}$, as well as the landmark constraints, play an important role to avoid this trivial and meaningless case.   Therefore, once the optimal mapping $f^*:X_1 \to \mathbb{C}$ is obtained, the optimal subsets can be obtained by taking intersection $\Omega_2^* = f^*(X_1)\cap X_2$, and inverse mapping $\Omega_1^* = (f^*)^{-1}(\Omega_2^*)$.

The choice of the regularization is crucial in our model. 
From Theorem \ref{thm:schauder} it is natural to consider the following regularization term to find a deformation $f:X_1\to \mathbb{C}$ that minimizes:
\begin{equation} \label{subproblem: 1}
E_{reg}(f) = \frac{1}{2}\int_{X_1} |\nabla \mu(f)|^2,
\end{equation}
where $\mu(f)$ denotes the Beltrami coefficient of $f$. Thus, minimizing this energy will promote the smoothness of the mapping $g$. We call such a process to minimize $E_{reg}$ a {\it geometric smoothing process}.  
Hence our initial version of ISR energy can be written as

\begin{equation} \label{eq: isr_initial}
    E_{ISR}(f) = \int_{f(X_1)\cap X_2} (I_1\circ f^{-1} - I_2)^2 + \beta \frac{1}{2}\int_{X_1} |\nabla \mu(f)|^2,
\end{equation}
where $\beta > 0$ is a parameter that controls the strength of the geometric smoothing. We next describe the admissible set for the energy minimization problem.

\subsection{Choice of $\mathcal{A}$}
Other requirements on the registration map may be imposed. Mathematically, we need to design a suitable admissible set $\mathcal{A}$ where $f$ should lie in. In our case, we define our admissible set to be the intersection of three sets of mappings $\mathcal{A}=\mathcal{L}\cap \mathcal{S} \cap \mathcal{B}$. More specifically,
\begin{equation}
\begin{split}
	\mathcal{L} &=\{f:X_1 \to \mathbb{C}: f(p_i) = q_i, i=1,2,...,n\};\\
	\mathcal{S} &=\{f:X_1 \to \mathbb{C}: Df \in \mathcal{M}_{K_1,K_2}\};\\
	\mathcal{B}&=\{f:X_1 \to \mathbb{C}: ||\mu(f)||_{\infty}<1\}.
\end{split}	
\end{equation}
\noindent where $p_i\in X_1$ and $q_i\in X_2$ are the corresponding landmarks on $X_1$ and $X_2$ respectively, $\mathcal{M}_{K_{1},K_{2}}=\{M:\,K_{2}\leq\sigma_{2}(M)\leq\sigma_{1}(M)\leq K_{1}\}$, $\sigma_1(M)$ and $\sigma_2(M)$ are the largest and smallest singular values of $M$ respectively.
$\mathcal{L}$ encodes the landmark constraints. $\mathcal{S}$ aims to control the area distortion of $f$. Note that this is necessary, since the Beltrami coefficient of $f$ encodes the relative strength of stretching effect of $f$, but not the absolute scale distortion of the mapping. 
% Although the scale can be determined once any of the uniqueness-implying data is applied to solve the equation, such as the boundary value or the minimal condition, there are difficulties with such approaches.  In the case of boundary value data, it is often impractical to assume they are available. In the case of minimal condition, how it influences the scale under perturbation of the Beltrami coefficients can be difficult to handle, since one does not have the optimal Beltrami coefficient initially.
Finally, $\mathcal{B}$ aims to require the Beltrami coefficient of $f$ to have supremum norm strictly less than $1$, which enforces the local bijectivity of $f$ as discussed in Section \ref{sec:3}. 

\subsection{Applying splitting method to $E_{ISR}$}
Because $\mu(f)$ is a differential quantity of $f$, it is difficult to obtain the gradient direction of the $E_{reg}$ energy directly. In other words, one has to solve a partial differential equation to recover $f$ from $\mu(f)$. This will be explained below in Section \ref{sec:lbs}. We thus tackle each term separately using the following relaxation:
\begin{equation} \label{eq: isr_split}
    E_{ISR}(f,\nu) = \int_{f(X_1)\cap X_2} (I_1\circ f^{-1} - I_2)^2 + \frac{\alpha}{2}\int_{X_1} |\mu(f) - \nu|^2 + \frac{\beta}{2}\int_{X_1} |\nabla \nu|^2,
\end{equation}
where $\alpha,\beta>0$ are parameters, and the energy minimization is subject to $f\in \mathcal{A}$. The term $\frac{\alpha}{2}\int_{X_1} |\mu(f) - \nu|^2$ guarantees that our geometric smoothing preserves the geometric structures of $f$ as well as possible. Our splitting scheme thus passes in between the $f$-subproblem and $\nu$-subproblem, namely optimizing the coordinate function of the mapping $f$ and optimizing the Beltrami coefficients, which are connected via solving the Beltrami equation. We shall describe the splitting algorithm in Section \ref{sec:isr_algo}.

\section{Main algorithms} \label{sec:5}
In this section, we explain our algorithm to solve the optimization problem (\ref{eq: isr_split}) in details. In a high level overview, our algorithm involves two crucial steps. The first is the projection into the constraint set $\mathcal{S}$, which is about controlling the principal distortions of the mapping and is summarized in Algorithm \ref{algo:bdsv}. The second is the geometric smoothing step, which combines with the constraint yields the free boundary quasiconformal deformation algorithm, summarized in Algorithm \ref{algo: freeboundary}. Our final Algorithm \ref{algo: isr} that solves the ISR problem is obtained by combining intensity matching with the free boundary quasiconformal deformation algorithm.

In the following we first describe the projection onto $\mathcal{S}$. The projection onto $\mathcal{L} \cap \mathcal{B}$ can be taken care of during the projection onto $\mathcal{S}$ and during the minimization process of the regularization term $E_{reg}(f)$, which is explained in Section \ref{sec:freeboundary}. Our proposed algorithm for solving the ISR problem is derived by introducing the intensity matching subproblem into our deformation model, explained in Section \ref{sec:isr_algo}. Solving Beltrami equations is briefly introduced in Section \ref{sec:lbs}, whose details are referred to \cite{lam2014landmark,qiu2018parametrizing}.
 
\subsection{Projection of $f$ onto $\mathcal{S}$}\label{projectionstep}
In this subsection, we discuss the projection of $f$ onto $\mathcal{S}$ in details. We use an iterative scheme to project $f$ onto $\mathcal{S}$. Given a mapping $f$, its differential $Df$ may not lie in $\mathcal{M}_{K_1,K_2}$. We first look for a matrix $\mathcal{P}_{K_1,K_2}(Df)$ that solves the following minimization problem:
\begin{equation}\label{projmin}
	\mathcal{P}_{K_1,K_2}(Df)={\bf argmin}_{M\in \mathcal{M}_{K_1,K_2}} ||M- Df||_F^2
\end{equation}
where $\|\cdot\|_F$ denotes the Frobenius norm. 
Solving this minimization problem is equivalent to finding a projection of $Df(z)$ onto the space of matrices $\mathcal{M}_{K_1,K_2}$, whose singular values are bounded by $K_1$ and $K_2$. The solution of the problem can be given explicitly as follows.

\begin{theorem}\label{theorem1}
The unique solution of the minimization problem (\ref{projmin}) is given by
\[
\mathcal{P}_{K_1,K_2}(Df)=U\begin{pmatrix}\min(\sigma_{1}(Df),K_{1})\\
 & \max(\sigma_{2}(Df)),K_{2})
\end{pmatrix}V^{T},
\]
where $U,V$ are the rotation matrices such that $Df=U\begin{pmatrix}\sigma_{1}(Df)\\
 & \sigma_{2}(Df)
\end{pmatrix}V^{T}$. 
\end{theorem}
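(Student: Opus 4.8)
The plan is to reduce the matrix optimization \eqref{projmin} to a one-dimensional truncation problem by exploiting the unitary invariance of the Frobenius norm together with von Neumann's trace inequality. Since $Df(z)$ is a fixed $2\times 2$ matrix for each $z\in X$, I treat \eqref{projmin} pointwise as a minimization over $M\in\mathcal{M}_{K_1,K_2}$; this set is compact (all singular values are pinned into $[K_2,K_1]$), so a minimizer exists, and everything reduces to identifying it.

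First I would expand $\|M-Df\|_F^2 = \|M\|_F^2 - 2\,\mathrm{tr}(M^{T}Df) + \|Df\|_F^2 = \sum_i \sigma_i(M)^2 - 2\,\mathrm{tr}(M^{T}Df) + \sum_i \sigma_i(Df)^2$, using that $\|\cdot\|_F^2$ is the sum of squared singular values. By von Neumann's trace inequality, $\mathrm{tr}(M^{T}Df)\le \sum_i \sigma_i(M)\,\sigma_i(Df)$, with equality precisely when $M$ and $Df$ share a simultaneous singular value decomposition, i.e. $M = U\,\mathrm{diag}(\sigma_1(M),\sigma_2(M))\,V^{T}$ with the same orthogonal $U,V$ (in matching order) that diagonalize $Df$. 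Hence $\|M-Df\|_F^2 \ge \sum_i \big(\sigma_i(M)-\sigma_i(Df)\big)^2$, and this lower bound is attained by any $M$ of the form $U\,\mathrm{diag}(s_1,s_2)\,V^{T}$. So \eqref{projmin} is equivalent to minimizing $\sum_i (s_i-\sigma_i(Df))^2$ subject to $K_2\le s_2\le s_1\le K_1$.

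Next I would solve this scalar problem. Dropping for a moment the coupling $s_1\ge s_2$, the objective separates into two strictly convex quadratics, each minimized over the interval $[K_2,K_1]$; the minimizer of $(s-\sigma)^2$ on $[K_2,K_1]$ is the metric projection $\mathrm{median}(K_2,\sigma,K_1)$. Because $\mathrm{median}(K_2,\cdot,K_1)$ is monotone and $\sigma_1(Df)\ge \sigma_2(Df)>0$, the two projections automatically satisfy $s_1\ge s_2$, so the dropped constraint is inactive; in the relevant regime $K_2\le\sigma_1(Df)$ and $\sigma_2(Df)\le K_1$ these projections are exactly $s_1=\min(\sigma_1(Df),K_1)$ and $s_2=\max(\sigma_2(Df),K_2)$, and the candidate $M=U\,\mathrm{diag}(s_1,s_2)\,V^{T}$ lies in $\mathcal{M}_{K_1,K_2}$. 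This is the claimed formula.

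Finally, for uniqueness: the scalar minimizers $s_1,s_2$ are unique since a strictly convex quadratic on a closed interval has a unique minimizer, and the equality case of von Neumann's inequality then forces any optimal $M$ to use the singular vectors of $Df$. The only subtlety is the degenerate case $\sigma_1(Df)=\sigma_2(Df)$, where the SVD of $Df$ is not unique ($Df$ is a scalar multiple of a rotation in $\mathbb{R}^2$); there one checks directly that the optimal $M$ must be the same scalar multiple of that rotation, so uniqueness is retained. I expect the only real work to be this bookkeeping around degenerate and boundary-of-interval cases, together with supplying (or citing) the equality characterization in von Neumann's inequality; the conceptual content is entirely in the unitary-invariance reduction, after which the problem is scalar and transparent.
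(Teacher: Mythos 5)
Your proposal is correct and lands on the same structural reduction as the paper --- align the singular vectors of $M$ with those of $Df$, then truncate the singular values --- but it reaches that reduction through a different key lemma. The paper invokes the two-sided Procrustes characterization (optimal orthogonal factors are permutation-aligned with the singular vectors of the target), fixes $\Sigma$, and chains two inequalities to bound $E_{P2}$ below by the diagonal problem; you instead apply von Neumann's trace inequality directly to $\mathrm{tr}(M^{T}Df)$, which yields the lower bound $\|M-Df\|_F^2\ge\sum_i(\sigma_i(M)-\sigma_i(Df))^2$ in one step together with its equality case. Since the Procrustes result is itself usually derived from von Neumann, your route is the more primitive and self-contained of the two, and it makes the uniqueness claim easier to justify: the paper asserts uniqueness in the theorem but its proof only exhibits a minimizer, whereas your equality-case analysis (plus strict convexity of the clamped scalar quadratics) actually delivers it, modulo the $\sigma_1=\sigma_2$ degeneracy you correctly flag. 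You also make explicit two points the paper passes over silently: that the ordering constraint $s_1\ge s_2$ is inactive because the clamp $\mathrm{median}(K_2,\cdot,K_1)$ is monotone, and that the stated formula $\min(\sigma_1,K_1)$, $\max(\sigma_2,K_2)$ coincides with the true metric projection onto $[K_2,K_1]$ only in the regime $\sigma_1(Df)\ge K_2$ and $\sigma_2(Df)\le K_1$; outside that regime the clamp, not the paper's formula, is the correct answer. The remaining work you defer (equality characterization in von Neumann, degenerate SVD bookkeeping) is standard and does not hide any obstruction.
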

\begin{proof} 
This is related to the general two-sided Procrustes problem. Suppose $A_1$ and $A_2$ are two $n\times n$ matrices. $A_1 = U_1 \Sigma_1 V_1^T$ and $A_2 = U_2 \Sigma_2 V_2^T$ are the singular value decompositions of $A_1$ and $A_2$ respectively. Consider
\[
E_{P1} (Q_1,Q_2) = ||Q_1^TA_1 Q_2 - A_2||_F^2,
\]
\noindent where $Q_1$ and $Q_2$ are $n\times n$ orthogonal matrices. Then, the minimizers $Q_1^*,Q_2^*$ of $E_{P1}$ satisfy:
\begin{equation}\label{procruste}
U_1 = Q_1^* U_2 \Pi \text{ and } V_1 = Q_2^* V_2 \Pi,
\end{equation}
\noindent where $\Pi$ is the permutation matrix that minimizes $Tr(\Sigma_2\Pi\Sigma_1\Pi)$.

We now consider our original minimization problem (\ref{projmin}). Let $M = U_M \Sigma V_M^T$ be the singular value decomposition of $M$. Then, we observe that:
\[
|| U\begin{pmatrix}\sigma_{1}(Df)\\
 & \sigma_{2}(Df)
\end{pmatrix}V^{T} - U_M \Sigma V_M^T||_F^2 = || \begin{pmatrix}\sigma_{1}(Df)\\
 & \sigma_{2}(Df)
\end{pmatrix} - Q_1^T \Sigma Q_2||_F^2,
\]
\noindent where $Q_1=U_M^T U$ and $Q_2=V_M^T V$ are orthogonal matrices. Thus, the minimization problem 
(\ref{projmin}) is equivalent to minimizing with respect to $Q_1, \Sigma, Q_2$:
\[
E_{P2}(Q_1,\Sigma,Q_2) = || \begin{pmatrix}\sigma_{1}(Df)\\
 & \sigma_{2}(Df)
\end{pmatrix} - Q_1^T \Sigma Q_2||_F^2,
\]
where $Q_1, Q_2$ are orthogonal matrices and $\Sigma$ is a diagonal matrix with non-negative entries. Fixing a diagonal matrix $\Sigma$, we consider the minimization problem over $(Q_1, Q_2)$ of $E_{P2}(Q_1,\Sigma,Q_2)$. Then, the minimizer must satisfy $I=Q_1^*\Pi$ and $I=Q_2^*\Pi$ according to (\ref{procruste}). Thus, for any orthogonal matrices $Q_1$ and $Q_2$ together with any diagonal matrix $\Sigma$,
\begin{equation*}
\begin{split}
	E_{P2}(Q_1,\Sigma,Q_2) &= || \begin{pmatrix}\sigma_{1}(Df)\\
 & \sigma_{2}(Df)
\end{pmatrix} - Q_1^T \Sigma Q_2||_F^2\\
& \geq || \begin{pmatrix}\sigma_{1}(Df)\\
 & \sigma_{2}(Df)
\end{pmatrix} - \Pi \Sigma \Pi||_F^2\\
&\geq || \begin{pmatrix}\sigma_{1}(Df)\\
 & \sigma_{2}(Df)
\end{pmatrix} - 
\overline{\Sigma}||_F^2\\
&= E_{P2}(I,\overline{\Sigma},I),
\end{split}
\end{equation*}
\noindent where $\overline{\Sigma}$ is the minimizer of $|| \begin{pmatrix}\sigma_{1}(Df)\\
 & \sigma_{2}(Df)
\end{pmatrix} - D||_F^2$ over all diagonal matrix $D$ with diagonal entries bounded by $K_1$ and $K_2$. Obviously, $\overline{\Sigma}$ can be written explicitly as:
\[
\overline{\Sigma} = \begin{pmatrix}\min(\sigma_{1}(Df),K_{1})\\
 & \max(\sigma_{2}(Df)),K_{2})
\end{pmatrix}.
\]
\noindent We conclude that: $M^* =UI\overline{\Sigma}IV^T = U\overline{\Sigma}V^T$ is the minimizer of the problem (\ref{projmin}). This completes the proof.
\end{proof}

\bigskip

We call $\mathcal{P}_{K_1,K_2}$ the projection operator of a matrix onto the space $\mathcal{M}_{K_{1},K_{2}}$. In this way we can effectively control the scaling effect of the mapping at each point.  For simplicity we shall use $K_1,K_2$ uniformly on the domain, though other choices are clearly possible. We next look for a mapping $g:X_1\to \mathbb{C}$, whose differential is closely resemble to $\mathcal{P}_{K_1,K_2}(Df)$:
\begin{equation}
	E_{\mathcal{P}}(g) = \int_{X_1} ||Dg-\mathcal{P}_{K_1,K_2}(Df) ||_F^2.
\end{equation}

We can separate the above minimization problem for each coordinate function of $g = (u,v)$, and therefore we are left with two Poisson equations as their Euler-Lagrange equations:
\begin{equation} \label{eq:poisson}
    \begin{cases}
    -\Delta u &= -\nabla \cdot m_1 \\
    -\Delta v &= -\nabla \cdot m_2
    \end{cases},
\end{equation}
where $ \mathcal{P}_{K_1,K_2}(Df)(z) = (m_1(z), m_2(z))$ is a $2\times2$ matrix for each $z\in X_1$. The landmark constraints can be incorporated here via back substitution into \eqref{eq:poisson}. In this way, $f$ is constrained to lie in $\mathcal{L}$. We repeat this procedure iteratively. More precisely, given a map $f^{(i)}$, we compute $\mathcal{P}_{K_1,K_2}(Df^{(i)})$. We then compute $f^{(i+1)}$ that minimizes $E_\mathcal{P}$. The iterative algorithm can be summarized in Algorithm \ref{algo:bdsv}.

\begin{algorithm}
\caption{Projection onto $\mathcal{S}$}
\begin{algorithmic}
\STATE{{\bf Inputs: } Identity mapping $f^{(1)}$, iteration number $N$, bounds on singular values  $0<K_2 \leq K_1$, landmark correspondences $(p_i, q_i)$.}
\STATE{{\bf Output: } deformed mapping $f^{(N)}$.}

\hrulefill
\FOR{$i = 1, ..., N$} 
\STATE{Compute $Df^{(i)}$ and its singular value decomposition on each triangle.}
\STATE{Compute $\mathcal{P}_{K_1, K_2}(Df^{(i)})$ on each triangle.}
\STATE{Solve equations \eqref{eq:poisson} with landmark constraints to get $f^{(i+1)}$.}
\ENDFOR
\end{algorithmic}  \label{algo:bdsv}
\end{algorithm}

\subsection{Linear Beltrami Solver} \label{sec:lbs}
Solving Beltrami equation with given coefficient function $\mu$ is central to our optimization problem.
Solvers of the equation, with or without boundary data, are developed in the previous work \cite{lui2013texture,qiu2018parametrizing}, and therefore we briefly only introduce the main components here and refer the reader to the references for details.

The key is to note the relation between the Beltrami equations and {\it div}-type second order elliptic systems. 
In our case, we decouple the complex equation into its real and imaginary parts $u$ and $v$ respectively, and solve the following system of equations with Dirichlet boundary condition
\begin{equation}
\begin{cases}
-\nabla\cdot(A\nabla u(z))  =0 \, \text{ in } int(X_1)\\
-\nabla\cdot(A\nabla v(z)) =0 \, \text{ in } int(X_1)\\
u = u_{0} \, \text{ on } \partial X_1 \\
v = v_{0} \, \text{ on } \partial X_1
\end{cases}\label{eq:divAgrad}
\end{equation}
where $A= \frac{1}{1-|\mu|^2}\begin{bmatrix}(\rho-1)^{2}+\tau^{2} & -2\tau\\-2\tau & (1+\rho)^{2}+\tau^{2}\end{bmatrix}$, $\mu = \rho + i\tau$, and $int(X_1)$ is the interior of $X_1$.
In practice, the operator $\nabla\cdot(A\nabla)$ is discretized using linear finite element method on triangular meshes, which has the same form with the well-known cotangent formula for the discrete Laplacian \cite{qiu2018parametrizing}. We call such a solver the {\bf L}inear {\bf B}eltrami {\bf S}olver. We also denote the reconstructed quasiconformal map $f$ from $\mu$ by $f = {\bf LBS}(\mu,\{(p_i,q_i)\}_{i=1}^L)$, where $\{(p_i,q_i)\}_{i=1}^L$ denotes the prescribed corresponding landmark constraints.

\subsection{Free boundary quasiconformal deformation}\label{sec:freeboundary}
Before solving the registration model \eqref{eq: isr_split}, we describe a simpler algorithm for free boundary deformation with controlled distortions without considering intensity matching, namely
\begin{equation} \label{eq:freeboundary_deform_energy}
    E_{fbd}(f, \nu) = \frac{\alpha}{2}\int_{X_1} |\mu(f) - \nu|^2 + \frac{\beta}{2}\int_{X_1} |\nabla \nu|^2,
\end{equation}
subject to the constraint that $f\in \mathcal{A}$. The free boundary deformation does not require the source domain to be mapped to another target domain with fixed geometry. As such, surface registration problem, of which a natural bijection between two surfaces does not exist, can be handled using our formulation. 

The algorithm goes as follows. Suppose a quasiconformal map $f^{(k)}$ is obtained at the $k$-th iteration.
Firstly, we look for a quasiconformal map with a controlled scale distortion. It can be achieved by requiring the quasiconformal map to lie in  $\mathcal{S}$. Hence, the projection step as described in subsection (\ref{projectionstep}) is applied on $f^{(k)}$ to obtain a new map $g^{(N_1, 1)}$, where $N_1$ is the iteration number of the Algorithm \ref{algo:bdsv}. 
Secondly, fixing $f=g^{(N_1, 1)}$, we proceed to solve the $\nu$-subproblem, namely the geometric smoothing process. Denote the Beltrami coefficient of $g$ by $\mu_{g^{(N_1, 1)}}$. To enforce $g^{(N_1, 1)} \in \mathcal{B}$, we require that the $L^{\infty}$ norm of the Beltrami coefficient is strictly less than $1$. It ensures the bijectivity of the deformation map. We apply the following simple thresholding method:
\begin{equation} \label{eq:6}
\mu'_{g^{(N_1, 1)}}(z)=\begin{cases}
\mu_{g^{(N_1, 1)}}(z) & \text{ if }|\mu_{g^{(N_1, 1)}}(z)|<1\\
0 & \text{ otherwise}
\end{cases}.
\end{equation}
After thresholding, we proceed to update $\nu$ via the following iterative scheme:
\begin{equation} \label{descentnu}
\nu^{(j+1)} = (1- \alpha)\nu^{(j)} + \alpha \mu_{g^{(N_1, 1)}}' + \beta\Delta\nu^{(j)},
\end{equation}
where $\Delta$ is the Laplace-Beltrami operator of the domain and $\nu^{(0)}= \mu'_{g^{(N_1,1)}}$. This is essentially performing gradient descent on $E_{fbd}$ with $f$ fixed. Here we have abused the notation for parameter $\beta$ to include the step size as well. We can keep the process for a few (denoted $M_2$) iterations. The associated deformation map can be reconstructed by {\bf LBS} to obtain a new map $f^{(k+1)}$ with Dirichlet boundary condition given by $g$, subject to the prescribed landmark constraints that $f^{(k+1)}(p_i) = q_i$ for $i=1,2,...,n$. We keep this process for $M_1$ iterations to obtain $g^{(N_1,M_1)}$ and set $f^{(k+1)} = g^{(N_1,M_1)}$. We run the above iterations to obtain a sequence of quasiconformal maps $\{f^{(k)}\}_{k=1}^{N}$. The algorithm is summarized as in Algorithm \ref{algo: freeboundary}.

\begin{remark}
The term ``free boundary" refers to the property that our model requires no boundary constraints and need only a few landmarks to guide the deformation. This property can be seen from the discretization of the Poisson equation \eqref{eq:poisson}, where the Laplacian matrix involved is of rank $n-1$, $n$ is the number of vertices \cite{sorkine2005laplacian}. The geometric smoothing process is applied in the interior of the mapping only. Thus when using {\bf LBS} to recover the mapping from its Beltrami coefficients, we can assume that the boundary inherits from the projection to $\mathcal{S}$ step. Therefore, our model is applicable to domains of arbitrary topology, in particular multiply connected domains. This is experimentally illustrated in the Example 7 in Section \ref{sec:6}.
\end{remark}

\begin{algorithm}
\caption{Free boundary quasiconformal deformation}
\begin{algorithmic}
\STATE{{\bf Inputs: } Static and moving meshes with landmark correspondences $\{(p_i, q_i)\}_{i=1}^n$; overall iteration number $N$;  Projection step iteration number $N_1$, bounds on singular values  $0<K_2 \leq K_1$;  smoothing parameters $\alpha, \beta >0$ and smoothing step iteration numbers $M_1, M_2$.}
\STATE{{\bf Output: } Optimal free boundary quasiconformal deformation $f^{(N)}$.}

\hrulefill
\STATE{Initialize $f^{1}$ to be the identity mapping}
\FOR{$i = 1, ..., N$} 
\FOR{$l = 1, ..., N_1$}
\STATE{Compute $Df^{(i)}$ and its singular value decomposition on each triangle.}\STATE{Compute $M^{(i)}=\mathcal{P}_{K_1, K_2}(Df^{(i)})$ on each triangle.}
\STATE{Solve equations \eqref{eq:poisson} with landmark constraints $\{(p_i, q_i)\}_{i=1}^n$ to get $g^{(l, 1)}$.}
\ENDFOR
\FOR{$j = 1, ..., M_1$}
\STATE{Compute the Beltrami coefficients of $g^{(N_1, j)}$ and applying the thresholding \eqref{eq:6}.}
\FOR{$k = 1, ..., M_2$}
\STATE{Run the iteration according to (\ref{descentnu}).}
\ENDFOR
\STATE{Set $\nu = \nu^{(M_2+1)}$.}
\STATE{Set $g^{(N_1, j+1)} = {\bf LBS}(\nu,\{(b_i, b_i)\}_i)$, where $b_i$'s are the boundary points of $g^{(N_1, j)}$}
\ENDFOR
\STATE{Set $f^{(i+1)} = g^{(N_1, M_1)}$.}
\ENDFOR
\end{algorithmic}  \label{algo: freeboundary}
\end{algorithm}

\subsection{A splitting scheme for ISR model (\ref{eq: isr_split})} \label{sec:isr_algo}

In this subsection, we propose an iterative scheme to solve the optimization problem (\ref{eq: isr_split}). Our strategy is to minimize $E_{ISR}$ in an alternating direction fashion. In essence, the iterative algorithm is a modification of Algorithm \ref{algo: freeboundary} by adding a step to minimize the intensity mismatching to solve the ISR problem.

Suppose a mapping $f^{(k)}$ is obtained at the $k$-th iteration. 
Again, to enforce the deformation map to lie in $\mathcal{S}$, the projection step as described in subsection \ref{projectionstep} is carried out to obtain a new map $g^{(N_1,1)}$.

To minimize $E_{ISR}$, we first consider the descent direction of $E_{fid}$. Recall that our intensity registration problem aims to minimize the following loss function over $f$
\[
\widetilde{E}(f) = \int_{X_2} |I_1\circ f^{-1}-I_2|^{2}\cdot  \chi_{f(X_1)\cap X_2},
\]
where $\chi_{f(X_1)\cap X_2}$ denotes the indicator function of the set $f(X_1)\cap X_2$.

Given fixed $g^{(N_1,j)}(X_1)\cap X_2$, we can minimize the intensity mismatching term by using the gradient descent method. Write the displacement vector from the last deformation as $\mathbf{u}$, and $I_{1}^{g^{(N_1,j)}} = I_{1}\circ{(g^{(N_1,j)})^{-1}}$, the intensity mismatching term can be rewritten as
\[
E_{fid}(\mathbf{u}) = \int_{X_2} |I_1^{g^{(N_1,j)}} \circ (\text{Id} - \mathbf{u})  - I_2  |^{2} \chi_{g^{(N_1,j)}(X_1)\cap X_2}.
\]
We approximate it by Taylor expansion around $\mathbf{u}= \mathbf{0}$, obtaining
\[
E_{fid}(\mathbf{u}) \approx \int_{X_2} |I_1^{g^{(N_1,j)}}  - I_2 - \mathbf{u}^T \nabla I_1^{g^{(N_1,j)}} |^{2} \chi_{g^{(N_1,j)}(X_1)\cap X_2}.
\]
Minimizing this energy using gradient descent directly may lead to unstable and non-diffeomorphic solutions. Regularization is often introduced:
\[
\tilde{E}_{fid}(\mathbf{u}) =\int_{X_2} |I_1^{g^{(N_1,j)}}  - I_2 - \mathbf{u}^T \nabla I_1^{g^{(N_1,j)}} |^{2} \chi_{g^{(N_1,j)}(X_1)\cap X_2} + \sigma_I|u|^2,
\]
where $\sigma_I$ is a related to the intensity difference of the two images. If we set $\sigma_I = \tau^2 (I_1^{g^{(N_1,j)}}  - I_2)^2$, the optimal displacement field can be obtained by the first order condition
\begin{equation} \label{eq: demons}
\mathbf{u}^* = \frac{(I_1^{g^{(N_1,j)}}  - I_2)\chi_{g^{(N_1,j)}(X_1)\cap X_2}}{|\nabla I_1^{g^{(N_1,j)}}|^2 + \tau^2 (I_1^{g^{(N_1,j)}}  - I_2)^2} \nabla I_1^{g^{(N_1,j)}}
\end{equation}
which is known as the Demons algorithm \cite{thirion1998image}. Recall that our algorithm aim to obtain a diffeomorphic registration. The above displacement field is often non-smooth and will result in non-diffeomorphic deformation mappings. To alleviate this issue, Gaussian filtering on the deformation field is often introduced \cite{pennec1999understanding,thirion1998image}. Note that other versions of the Demons algorithm, such as the one proposed in \cite{wang2005validation}, can also be used.
As a result, we can obtain a vector field ${\bf V} = (V_1, V_2)$ that gives a new map $\tilde{g} = g^{(N_1,j)} + {\bf V}$. Its associated Beltrami coefficient $\mu_{\tilde{g}}$ can be thus computed.

\begin{algorithm}[t]
\caption{Registration on inconsistent domains}
\begin{algorithmic}
\STATE{{\bf Inputs and parameters: } conformal parametrizations of the moving and static triangular meshes with intensity defined on vertices; overall algorithm iteration number $N$;  free boundary subproblem iteration number $N_1$, bounds on singular values  $0<K_2 \leq K_1$, landmark correspondences $(p_i, p_i')$; Intensity matching subproblem overall iteration number $M_1$; smoothing parameters $\alpha, \beta >0$ and smoothing iteration number $M_2$.}
\STATE{{\bf Output: } optimal deformation $f^{(N)}$ given landmarks and intensities in the sense of \eqref{eq: isr_split}}.

\hrulefill
\STATE{Initialize $f^{(1)}$ to be the identity mapping.}
\FOR{$i = 1, ..., N$} 
\FOR{$l = 1, ..., N_1$}
\STATE{Compute $Df^{(i)}$ and its singular value decomposition on each triangle.}
\STATE{Compute $\mathcal{P}_{K_1, K_2}(Df^{(i)})$ on each triangle.}
\STATE{Solve equations \eqref{eq:poisson} with landmark constraints to get $g^{(l,1)}$.}
\ENDFOR
\FOR{$j = 1, ..., M_1$}
\STATE{Obtain a vector field $V^{(j)}$ on $g^{N_1, j}(X_1)\cap X_2$ by matching intensities $\chi_{g^{(N_1, j)}(X_1)\cdot \cap X_2} I_{X}\circ (g^{(N_1, j)})^{-1}$ and $\chi_{g^{(N_1, j)}(X_1)\cap X_2} \cdot  I_{2}$ according to Equation \eqref{eq: demons}.}
\STATE{Apply $V^{(j)}$ to $g^{(N_1, j)}$ to get new deformation map $\tilde{g}$.}
\STATE{Compute the Beltrami coefficients of $\tilde{g}$ and applying the thresholding \eqref{eq:6}.}
%\FOR{$m = 1, ..., L$}
\STATE{Run $M_2$ iterations according to (\ref{descentnu}) to iteratively minimize $E_{reg}$ to get an update Beltrami coefficient $\nu$.}
%\ENDFOR
\STATE{Solve the Beltrami equation using the updated coefficient to get $g^{(M_1, j+1)}$.}
\ENDFOR
\STATE{Set $f^{(i+1)} = g^{(N_1, M_1)}$.}
\ENDFOR
\end{algorithmic}  \label{algo: isr}
\end{algorithm}
To enforce the deformation map lies in $\mathcal{B}$, we require that the supremum norm of the Beltrami coefficient is strictly less than 1. It ensures the bijectivity of the deformation map. We again apply the following simple thresholding method:
\begin{equation} \label{eq:6'}
\mu'_{\tilde{g}}(z)=\begin{cases}
\mu_{\tilde{g}}(z) & \text{ if }|\mu_{\tilde{g}}(z)|<1\\
0 & \text{ otherwise}
\end{cases}.
\end{equation}

After that, we proceed to minimize the regularization term. Geometric smoothing can be achieved using the equation (\ref{descentnu}) as described in Section \ref{sec:freeboundary}. This gives an updated Beltrami coefficient $\nu$.
The associated quasiconformal map can be reconstructed by {\bf LBS} to obtain $g^{(k,2)} = {\bf LBS}(\nu)$, subject to the prescribed landmark constraints. We repeat the process to iteratively minimize $E_{ISR}$ for $M$ iterations and set $f^{(k+1)} = g^{(N_1, M_1)}$. The main algorithm can now be summarized as in Algorithm \ref{algo: isr}.

\begin{remark}
Since in our formulation the energy is highly nonlinear in $f$, it would be worthwhile to explain why our algorithm is effective to minimize the energy. Note that in Algorithm \ref{algo: isr} we start with the free boundary deformation model where landmarks are matched. The result mapping will then be a good initialization for the intensity matching subproblem. This can be visualized in the energy plots of the experiments in Section \ref{sec:6}, where energy decreases very fast in the first few iterations, implying the splitting method works well in practice. 
\end{remark}

\section{Experimental results} \label{sec:6}
Before we proceed to the experimental results, several comments about the efficiency of our algorithm and hyper-parameter selection are in order. First of all, the projection operator $\mathcal{P}_{K_1, K_2}$ is independent for each triangle in the mesh, and thus can be parallelized. This is also true for intensity matching which can be parallelized for each vertex or pixel. The main computations are solving the linear equations \eqref{eq:poisson} and \eqref{eq:divAgrad}, both are very fast since the linear system involved is symmetric positive definite. For a mesh with $~5k$ vertices and $~9k$ triangles, in our un-optimized implementation using {\tt Matlab}, each inner iteration costs $~0.2s$ and $~0.5s$ on an 8-core Intel i7 machine, for the free boundary quasiconformal deformation model and the intensity matching step. Secondly, the hyper-parameters $\alpha, \beta$ generally give similarly results when in range $[0,0.1]$, depending how smooth the user wants the mapping to be. And we find the choice of iteration number $N_1, M_1, M_2$ from $1-5$ to achieve a good balance between computational cost and results. For most experiments demonstrated in this paper, the deformation has converged before $N=50$, mainly depending on the complexity of intensity matching. The choice for bounds $K_1 = 2, K_2 = 0.5$ in general works well except when the landmark induced deformation is too large, or can be adjusted if the user wants to put apriori constraint on the registration mapping. We also find our algorithm is fairly robust to choices of hyper-parameters in the Demons algorithm \cite{thirion1998image}. In the following we demonstrate results under various hyper-parameters, while different settings as described above should work similarly well. Our implementation is made public in GitHub\footnote{\url{https://github.com/sylqiu/incon_reg}}.

\subsection{Synthetic numerical experiments}
In this subsection, we test the performance of our proposed algorithms on two synthetic examples in the 2D domain. The first experiment is an ablation study that aims to demonstrate the effect of geometric smoothing. The second experiment aims to demonstrate the intensity matching in a pair of inconsistent image domains.

\bigskip

\noindent {\bf Example 1} {\it (The Effect of geometric smoothing)}: In this example, we test our algorithm for free boundary quasiconformal deformation (Algorithm 2), given landmark constraints as shown in the first row of Figure \ref{fig:gsmooth}. 
% We have chosen $\alpha=\beta=0.1$ and smoothing steps $M_1=30, M_2=10$. Bounds on singular values are chosen to be $K_1=5, K_2=0.2$. 
The result without geometric smoothing (that is, setting $M_1 = M_2 = 0$) is shown on the left in the second row, where one can clearly observe flipping and non-smooth singularities around the landmark points. The result with geometric smoothing is shown on the right in the second row, where the local deformations are well propagated around the landmark points and the obtained deformation is smooth. This shows the necessity and effectiveness of the geometric smoothing scheme.

\bigskip

\noindent {\bf Example 2} {\it (Registration of a pair of images)}: In this example, we test our intensity registration model (Algorithm 3) on an inconsistent pair of images of the letter ``A". 
In Figure \ref{fig:A_data}, the left on the first row shows the input images with their intensities to be matched, where the one on the left is the moving domain. Our goal is to register the left taller ``A" to the middle wider ``A". The given landmark correspondences are shown on the left in the second row. In this example, we set $\alpha=0.01, \beta = 0.01$, smoothing steps $M_1=1, M_2=10$; bounds $K_1 = 1.4, K_2 = 0.2$; free boundary subproblem for $N_1 = 1$ iterations; intensity subproblem for $20$ iterations and overall iteration $N=20$. The registered image from the moving image is shown on the right in the second row. It closely resembles to the target static image. The absolute difference in intensity after registration is shown on the right in the first row. This experimental result demonstrates that our algorithm is effective in finding an accurate registration map matching the intensities as well as finding the corresponding regions on the images. 

We also display the energy plot against the iteration number averaged per face of the mesh on the left of Figure \ref{fig:A_stat}, and the landmark error plot on the left. We can see that the algorithm successfully reduces the intensity and landmark mismatching errors.
 \begin{figure}
     \centering
     \subfloat{\includegraphics[width=6.5cm]{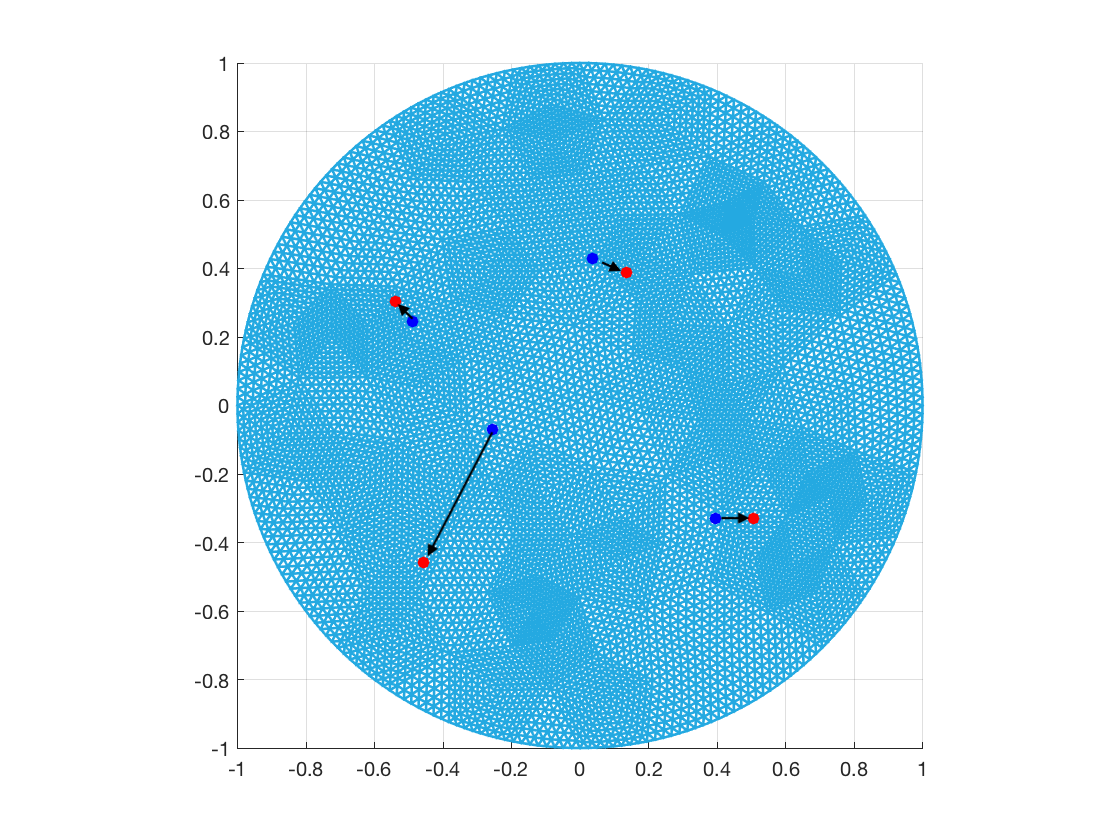}}\\
     \subfloat{\includegraphics[width=6.5cm]{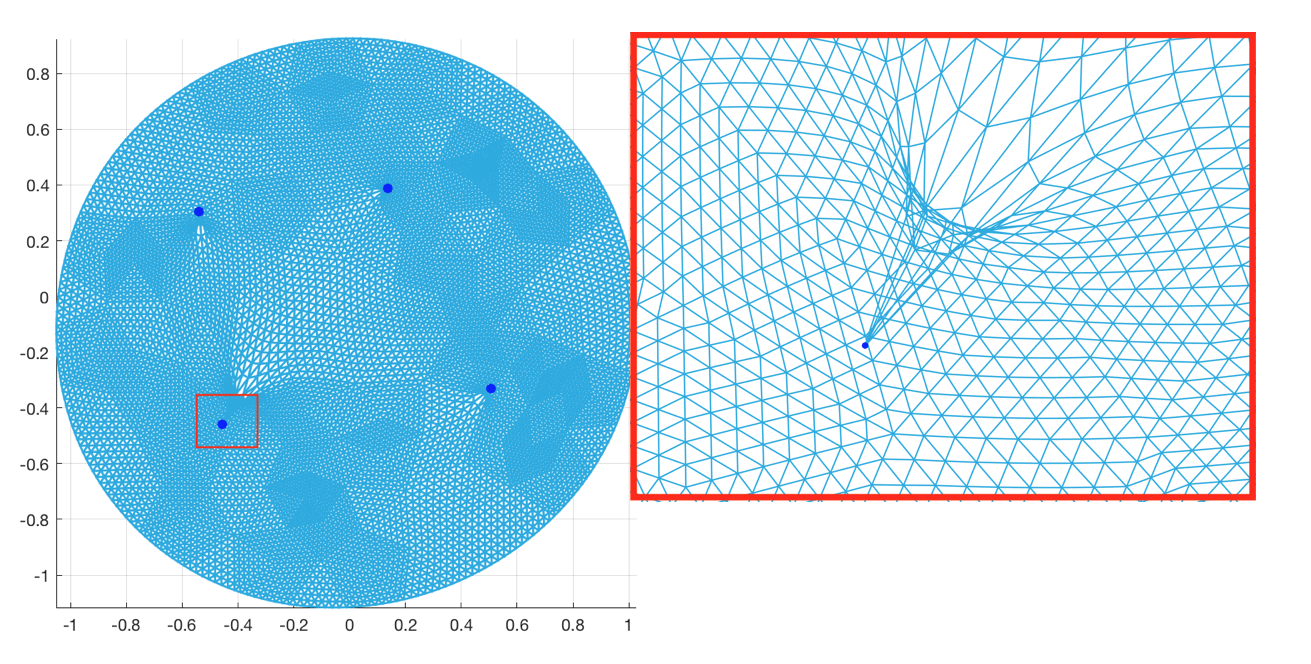}} 
     \subfloat{\includegraphics[width=6.5cm]{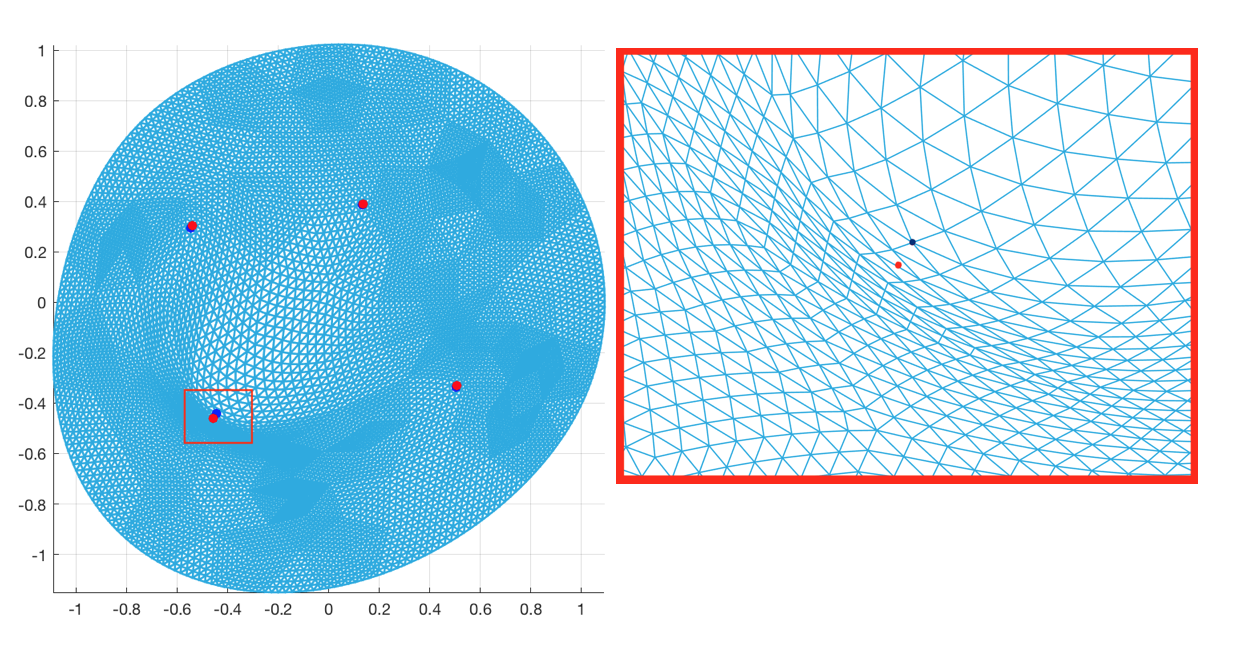}} 
     \caption{Effect of the geometric smoothing. The first row shows the input mesh. The landmark correspondences are displayed. We deform the input mesh so that the blue landmarks are matched to the red landmarks, using our proposed free boundary quasiconformal deformation algorithm. In the second row, the result on the left is obtained without geometric smoothing. The result on the right shows the output of the algorithm with geometric smoothing.}
     \label{fig:gsmooth}
 \end{figure}
 
  \begin{figure}
     \centering
     \subfloat{\includegraphics[height=4.5cm]{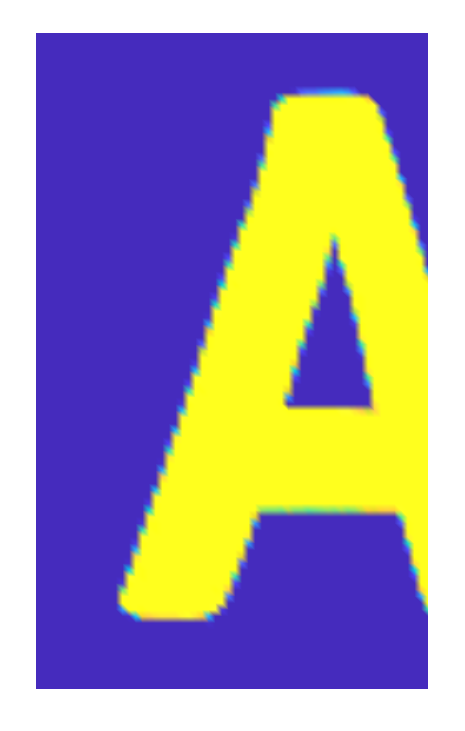}} \hspace{10pt}
     \subfloat{\includegraphics[height=4.5cm]{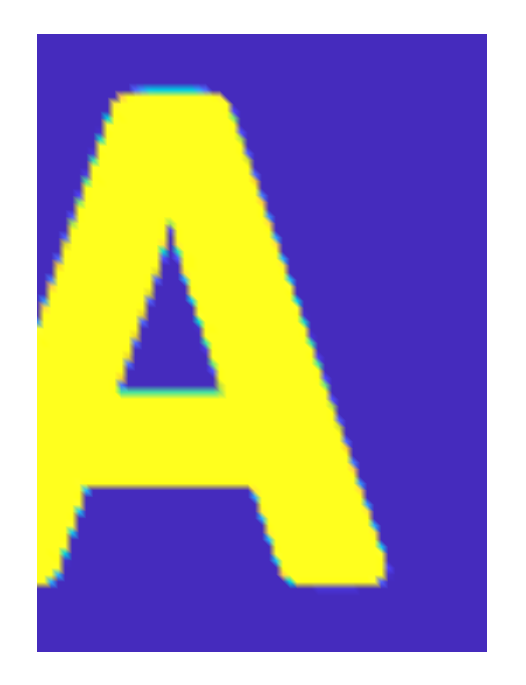}}\hspace{10pt}
     \subfloat{\includegraphics[height=4.5cm]{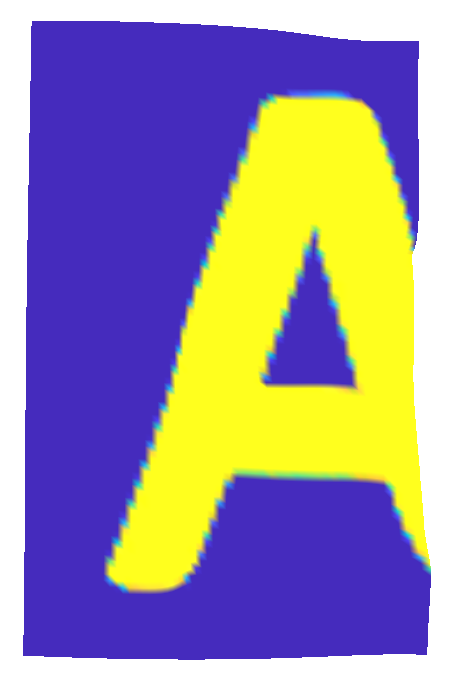}}
     \subfloat{\includegraphics[height=4.85cm]{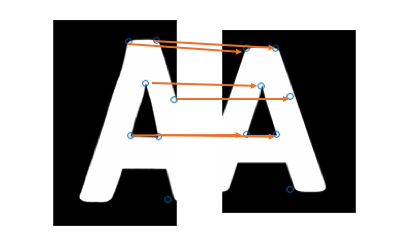}} \hspace{1pt}
    \subfloat{\includegraphics[height=4.5cm]{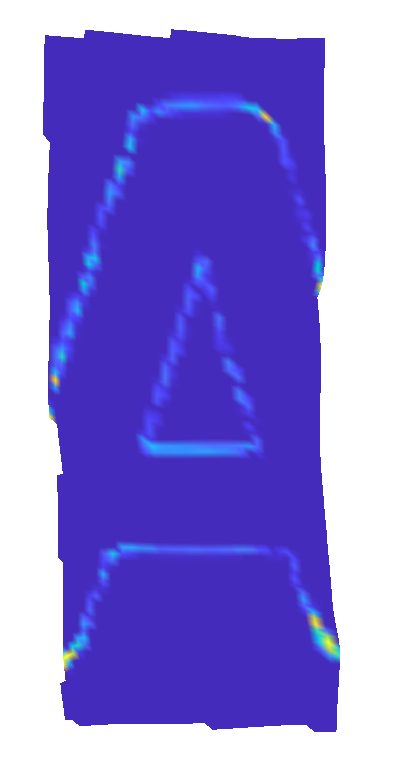}} 
     \subfloat{\includegraphics[width=18pt]{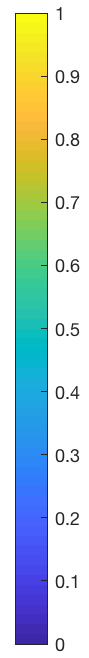}}\\

     \caption{Result of the inconsistent image registration experiment in Example 2. The left in the first row shows the input moving image. The middle in the first row shows the input static image. Our goal is to register the left taller ``A" to the middle wider ``A". The left in the second row shows the prescribed landmark correspondences. The registered image from the moving image is shown on the right in the first row. The right in the second row shows the absolute difference in intensity after registration.}
     \label{fig:A_data}
 \end{figure}

 \begin{figure}
     \centering
     \subfloat{\includegraphics[width=7.2cm]{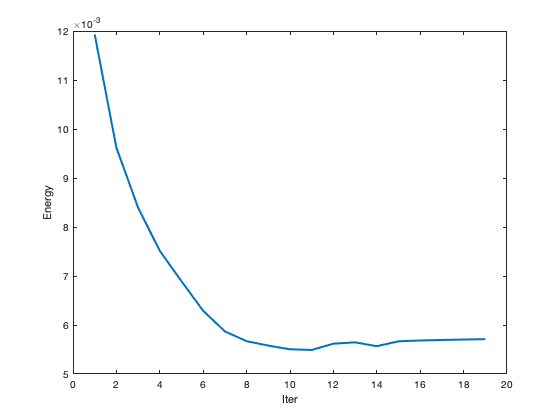}} 
     \subfloat{\includegraphics[width=7.2cm]{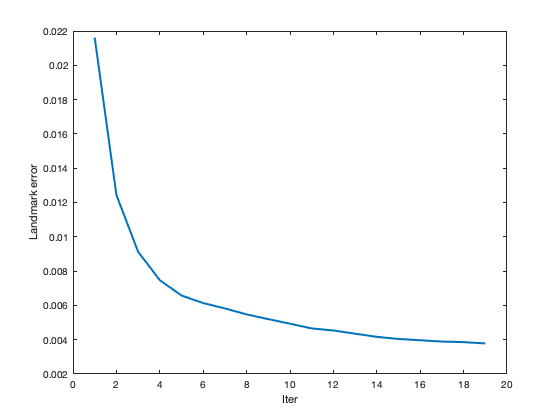}}
   
     \caption{Energy and landmark error plots for the image registration experiment against iteration number. Energy is averaged per triangle. The left shows the overall energy versus iterations. The right shows the landmark mismatching error versus iterations.}
     \label{fig:A_stat}
 \end{figure}

  \begin{figure}
     \centering
     \subfloat{\includegraphics[height=4.5cm]{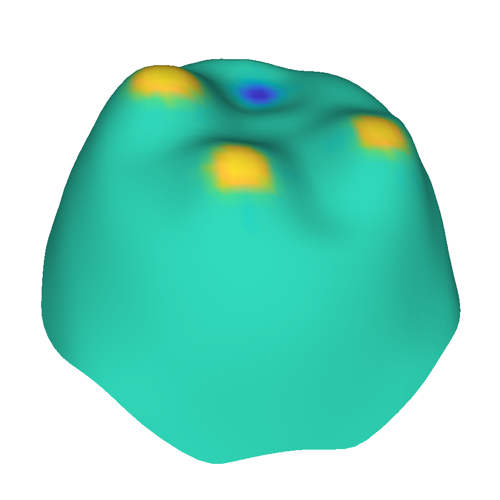}}
     \subfloat{\includegraphics[height=4.5cm]{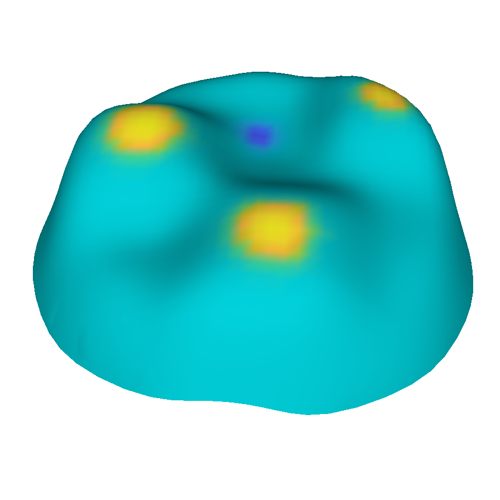}} \hspace{10pt}
     \subfloat{\includegraphics[height=4.5cm]{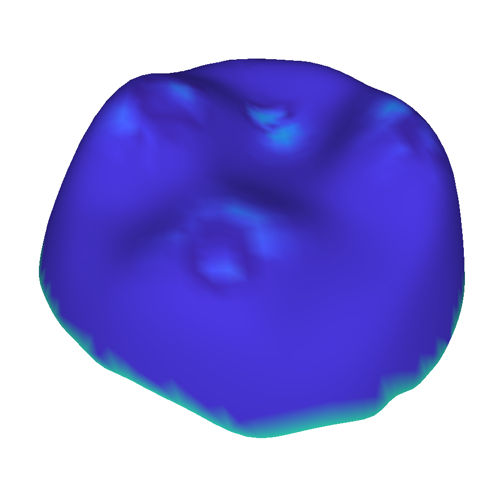}}
     \subfloat{\includegraphics[width=18pt]{figures/colorbar.png}}
     \caption{Surface registration for a pair of inconsistent tooth surfaces in Example 3. The left shows the input moving tooth surface. The middle shows the target static tooth surface. The right shows the difference of intensities on the registered surface.}
     \label{fig:teeth_data}
 \end{figure}
 
 \begin{figure}
     \centering
     \subfloat{\includegraphics[height=5.5cm]{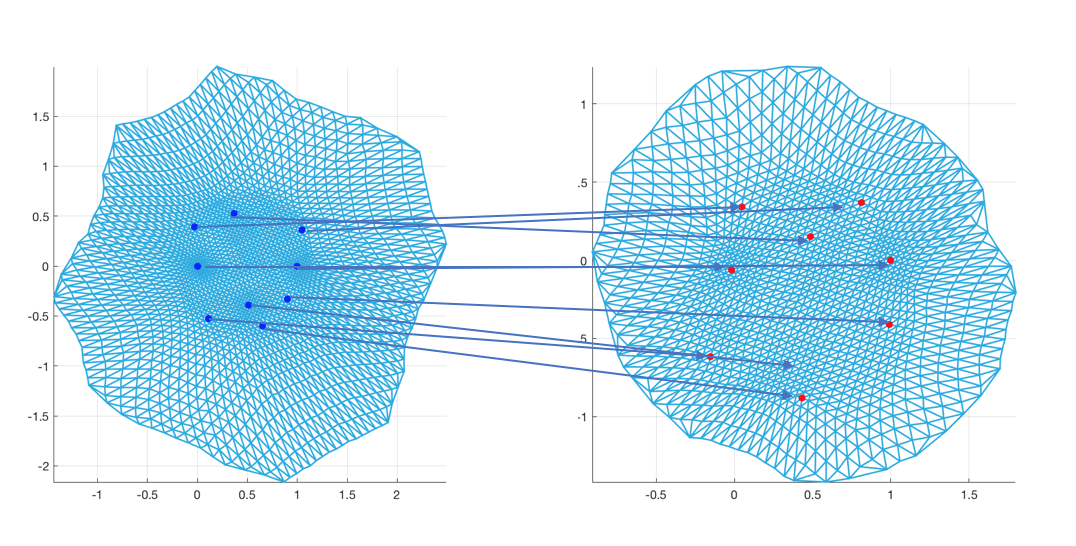}} \\
     \subfloat{\includegraphics[height=5.5cm]{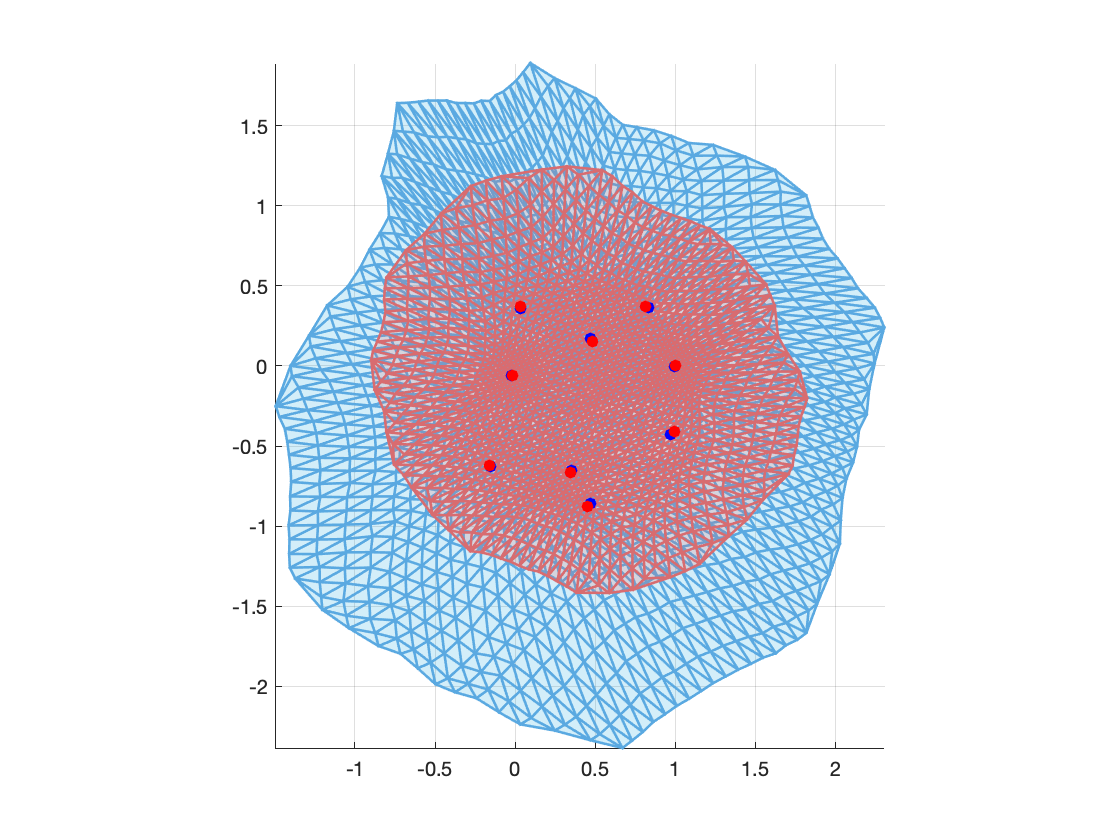}}
     \caption{The left in the first row shows the conformal parametrization of the moving tooth surface in Example 3. The right in the second row shows the conformal parametrization of the static tooth surface. The landmark correspondences in the 2D parameter domains are also displayed. The registration result in the 2D domain is shown in the second row. The blue mesh is transformed mesh from the moving mesh under the deformation map. The red mesh is the 2D mesh of the target surface under the conformal parametrization.}
     \label{fig:teeth}
 \end{figure}
 
 \begin{figure}
     \centering
     \subfloat{\includegraphics[height=4.5cm]{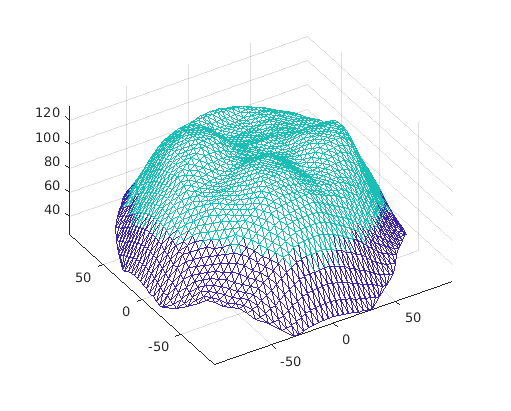}} \hspace{10pt}
     \subfloat{\includegraphics[height=4.5cm]{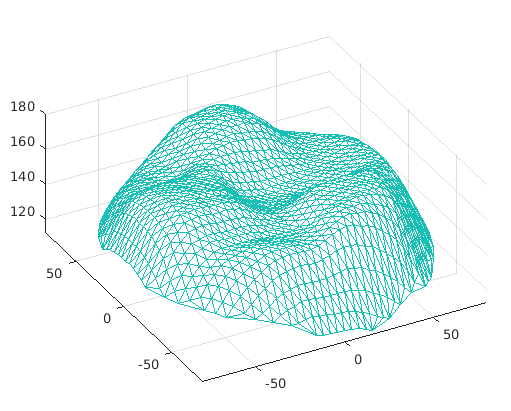}}
     \caption{Corresponding regions on the moving and static tooth surfaces in Example 3. The green region on left shows the corresponding region on the moving tooth surface. The green region on the right shows the corresponding region on the target tooth surface.}
     \label{fig:teeth_corr}
 \end{figure}
 
   \begin{figure}
     \centering
     \subfloat{\includegraphics[width=7.2cm]{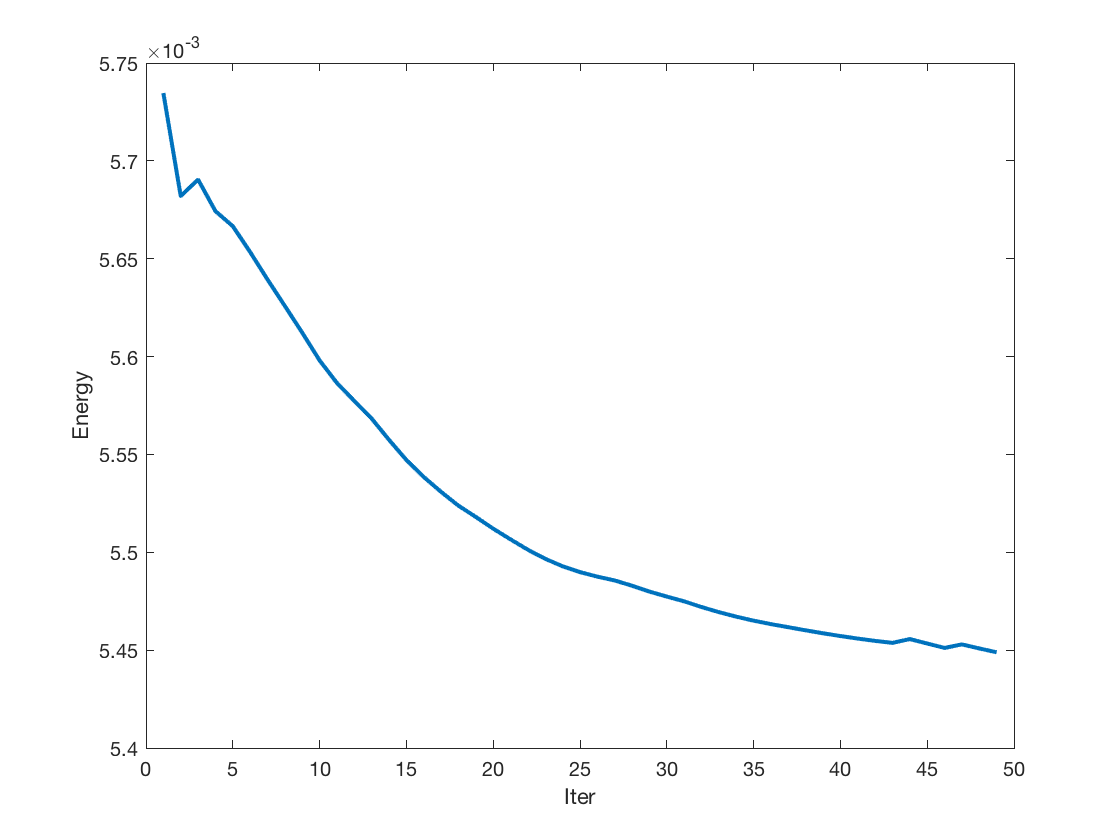}}\hspace{10pt}
     \subfloat{\includegraphics[width=7.2cm]{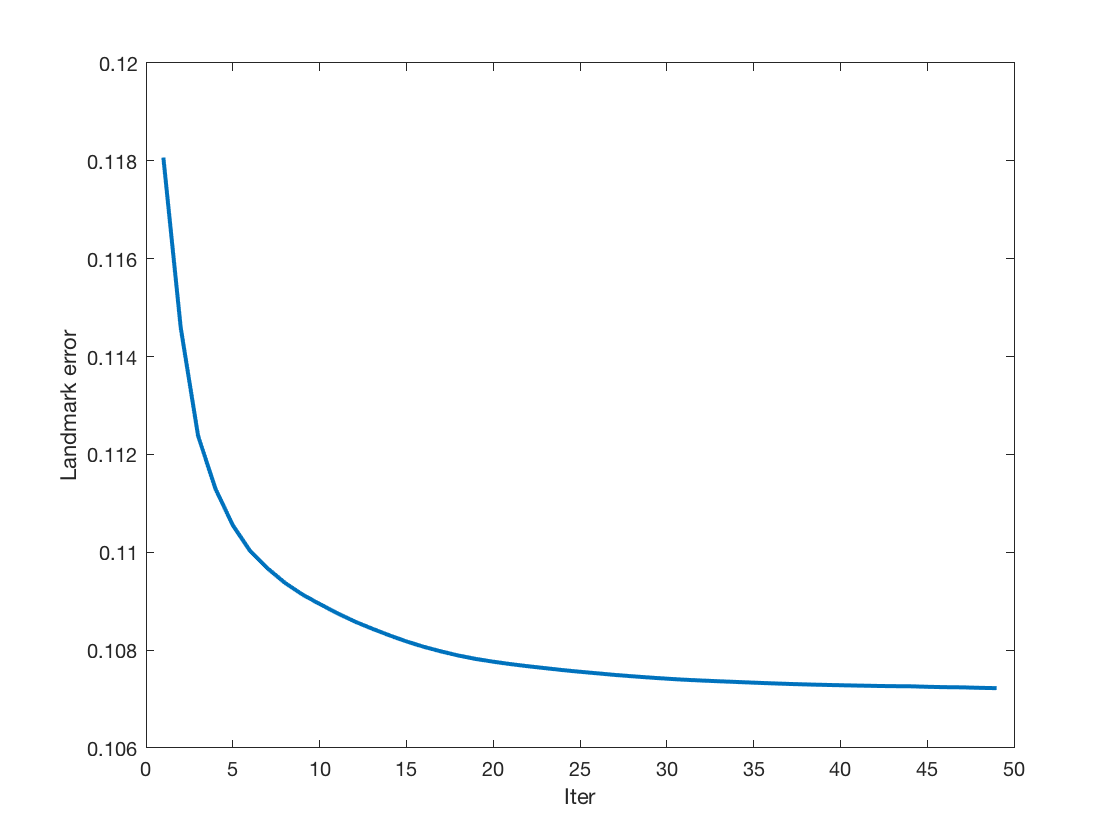}}
     \caption{Energy and landmark error plots for the teeth experiment in Example 3 against iteration number. Energy is averaged per face. The left shows the overall energy versus iterations. The right shows the landmark mismatching error versus iterations.}
     \label{fig:teeth_stat}
 \end{figure}

\subsection{Experiments on surfaces}
In this subsection, we test our registration model on surfaces.

\bigskip

\noindent {\bf Example 3} {\it (Registration of a pair of synthetic teeth surfaces)}: In this example, we test the registration model on an inconsistent pair of tooth surfaces. The two surfaces are not bijectively corresponding to each other. Only a partial subset of the source surface is in correspondence with a subset of the target surface.

In Figure \ref{fig:teeth_data}, the first and second columns shows the input surfaces with their curvatures to be matched, where the one on the left is the moving surface. 
We first perform conformal flattening \cite{levy2002least} of the two 3D meshes into 2D, as shown in the first row in Figure \ref{fig:teeth}. The landmark correspondences in the 2D parameter domains are also displayed. In this example, we have used $\alpha=0.01, \beta = 0.1$, smoothing steps $M_1=1, M_2=10$; bounds $K_1 = 1.2, K_2 = 0.8$; free boundary subproblem for $N_1 = 5$ iterations; intensity subproblem for $10$ iterations and overall iteration $N=50$. The registration result is shown in the last column of Figure \ref{fig:teeth_data}. It is the registered surface from the moving surface to the target static surface. The colormap on the surface is given by the curvature mismatching error. Note that the mismatching error is small, indicating the curvatures are accurately matched. The registration result in the 2D domain is shown in the second row of Figure \ref{fig:teeth}. The blue mesh is transformed mesh from the moving mesh under the deformation map. The red mesh is the 2D mesh of the target surface under the conformal parametrization. The intersection region of the two meshes is the region of correspondence amongst the two tooth surfaces.

We also display the energy plot against iteration number averaged per face of the mesh on the left of Figure \ref{fig:teeth_stat}, and the landmark error plot on the left. We can see that the algorithm successfully reduces the intensity and landmark matching errors.

 \begin{figure}
     \centering
     \subfloat{\includegraphics[height=4cm]{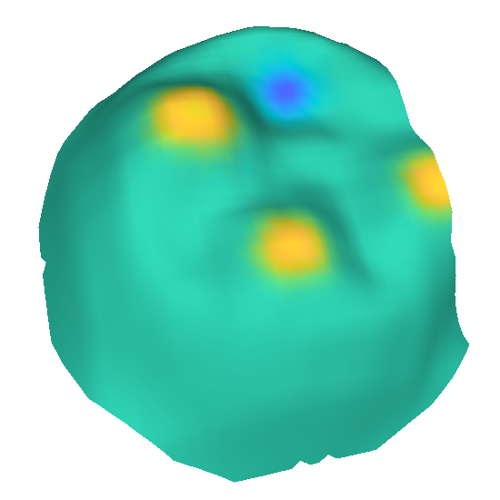}}\hspace{10pt}
     \subfloat{\includegraphics[height=4cm]{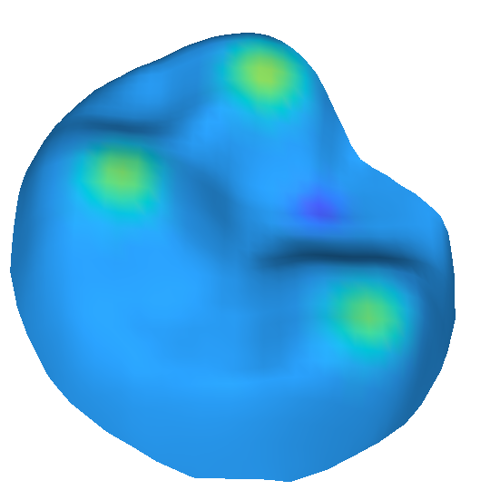}}
     \subfloat{\includegraphics[height=4cm]{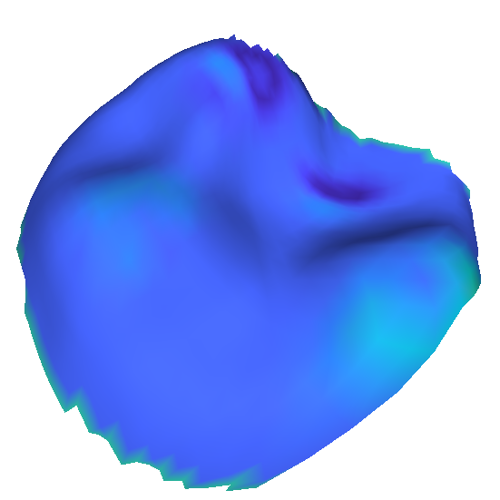}} 
     \subfloat{\includegraphics[width=18pt]{figures/colorbar.png}}
     \caption{Surface registration for another pair of inconsistent tooth surfaces in Example 4. The left shows the input moving tooth surface. The middle shows the target static tooth surface. The right shows the difference of intensities on the registered surface.}
     \label{fig:teeth2_data}
 \end{figure}

 \begin{figure}
     \centering
      \subfloat{\includegraphics[height=5.5cm]{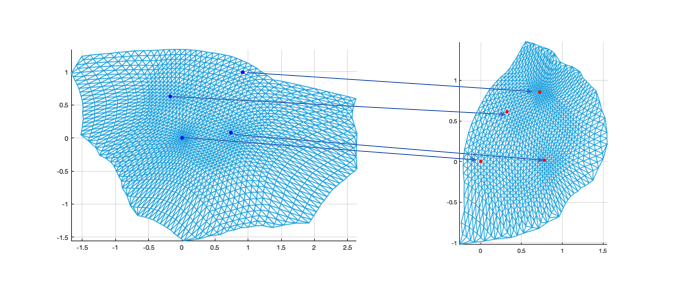}} \\
     \subfloat{\includegraphics[height=5.5cm]{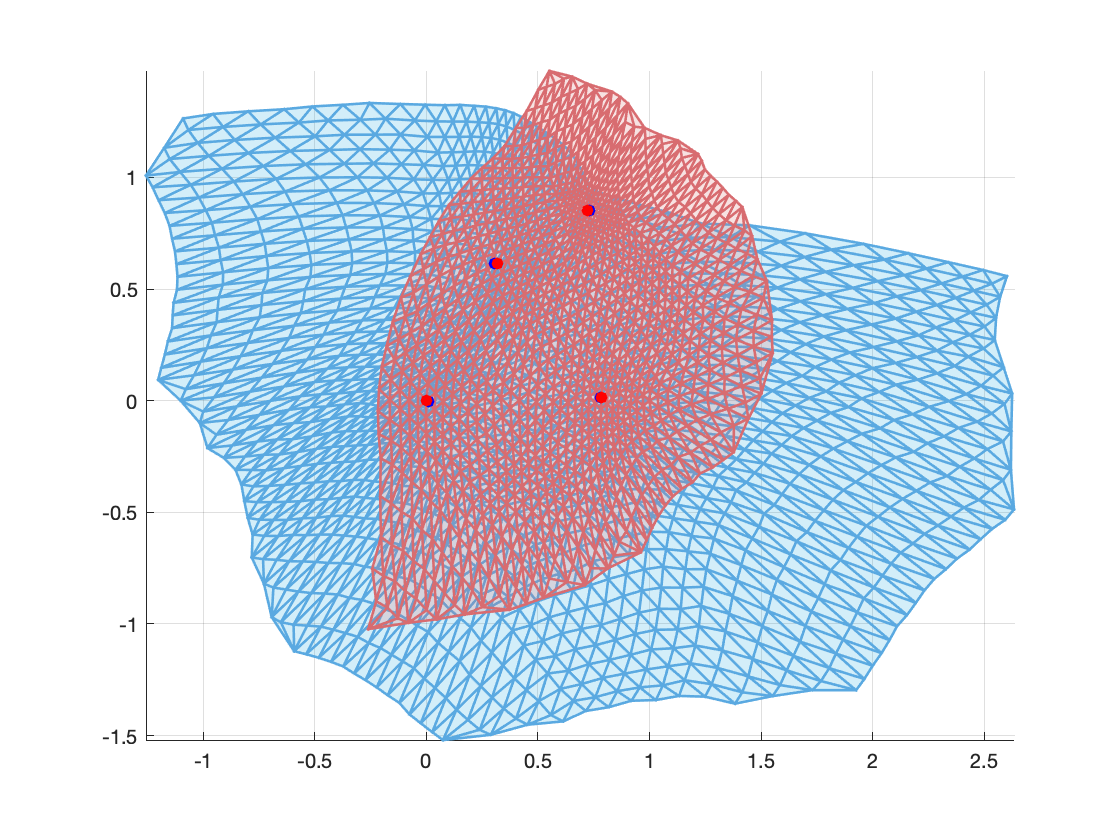}}
     \caption{The first row shows the conformal parametrizations of the moving and static tooth surfaces in Example 4 respectively. The landmark correspondences in the 2D parameter domains are also displayed. The registration result in the 2D domain is shown in the second row. The blue mesh is transformed mesh from the moving mesh under the deformation map. The red mesh is the 2D mesh of the target surface under the conformal parametrization.The intersection represents the corresponding region.}
     \label{fig:teeth2}
 \end{figure}
 
 \begin{figure}
     \centering
     \subfloat{\includegraphics[height=4.5cm]{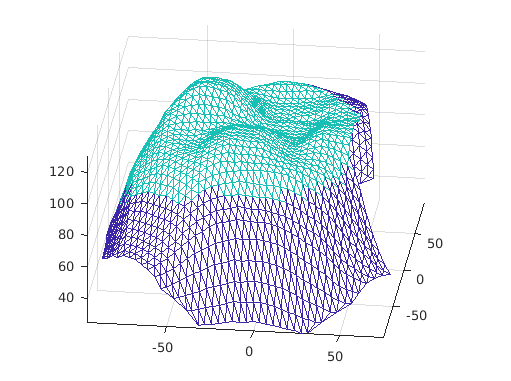}} \hspace{10pt}
     \subfloat{\includegraphics[height=4.5cm]{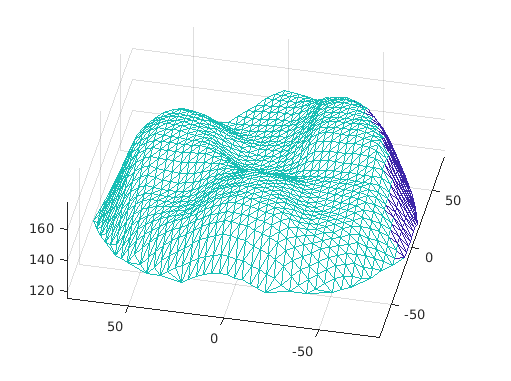}}
     \caption{Corresponding regions on the moving and static tooth surfaces in Example 4. The green region on left shows the corresponding region on the moving tooth surface. The green region on the right shows the corresponding region on the target tooth surface.}
     \label{fig:teeth2_corr}
 \end{figure}

\begin{figure}
     \centering
     \subfloat{\includegraphics[width=7.2cm]{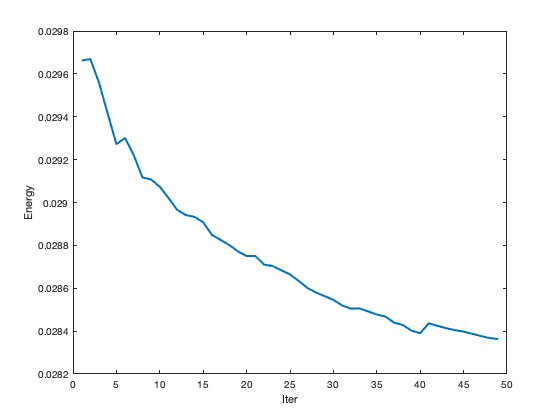}} \hspace{10pt}
     \subfloat{\includegraphics[width=7.2cm]{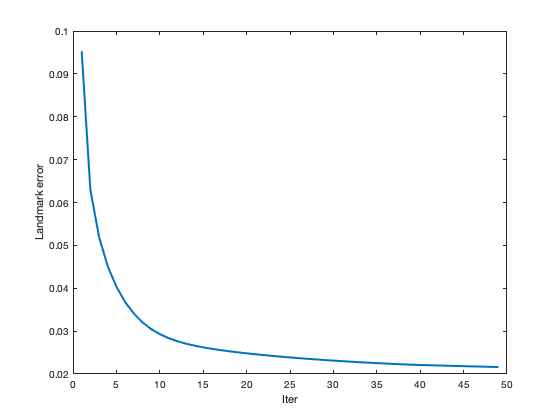}}
     \caption{Energy and landmark error plots for the tooth surface experiment in Example 4 against iteration number. Energy is averaged per face. The left shows the overall energy versus iterations. The right shows the landmark mismatching error versus iterations.}
     \label{fig:teeth2_stat}
 \end{figure}

  \begin{figure}
     \centering
     \subfloat{\includegraphics[height=4.5cm]{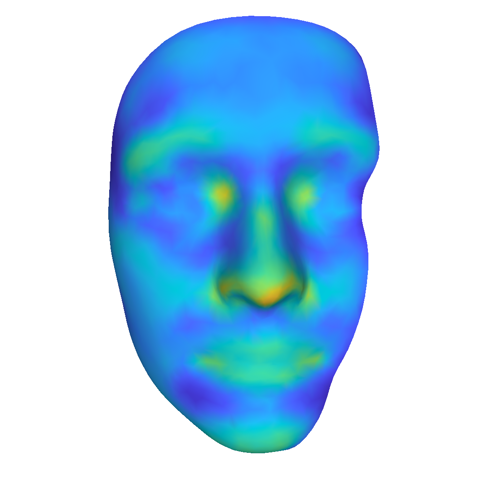}}
     \subfloat{\includegraphics[height=4.5cm]{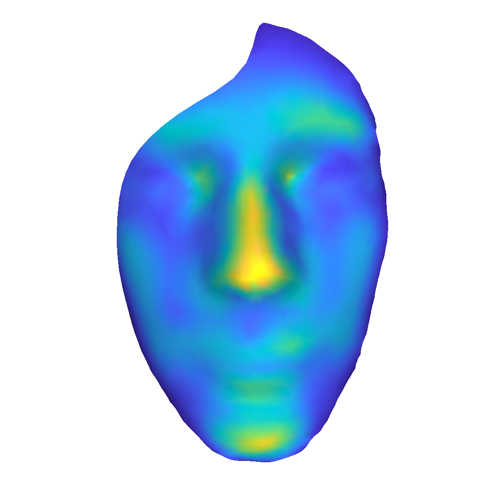}}
     \subfloat{\includegraphics[height=4.5cm]{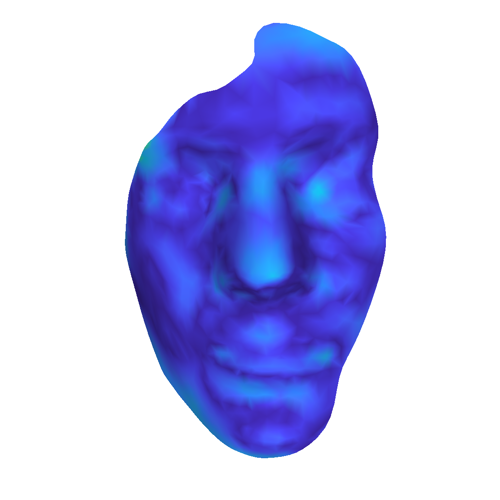}} 
     \subfloat{\includegraphics[width=18pt]{figures/colorbar.png}}
     \caption{Surface registration for a pair of inconsistent human faces in Example 5. The left shows the input moving surface The right shows the target static surface. The right shows the difference of intensities on the registered surface.}
     \label{fig:face_data}
 \end{figure}

 \begin{figure}
     \centering
      \subfloat{\includegraphics[height=5cm]{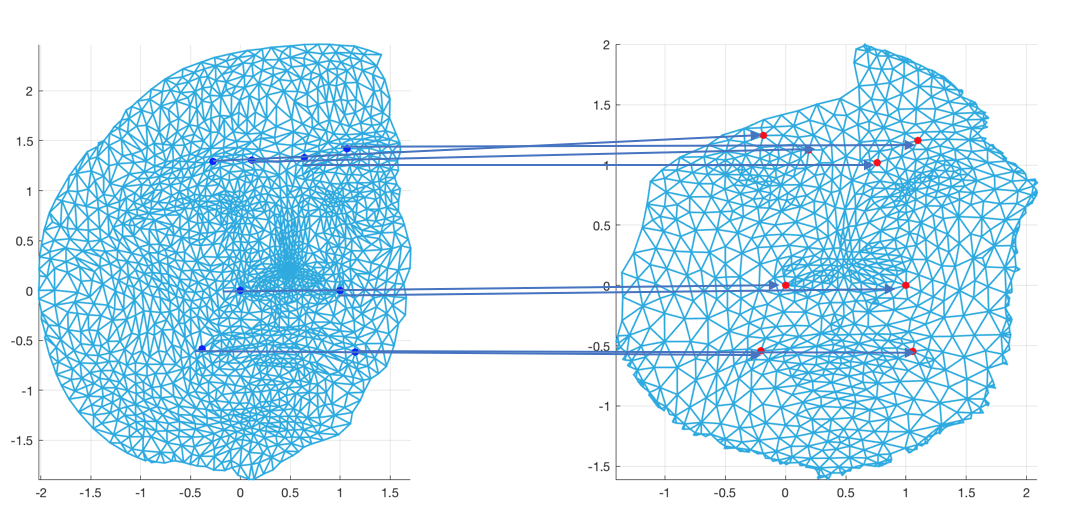}} \\
     \subfloat{\includegraphics[height=5cm]{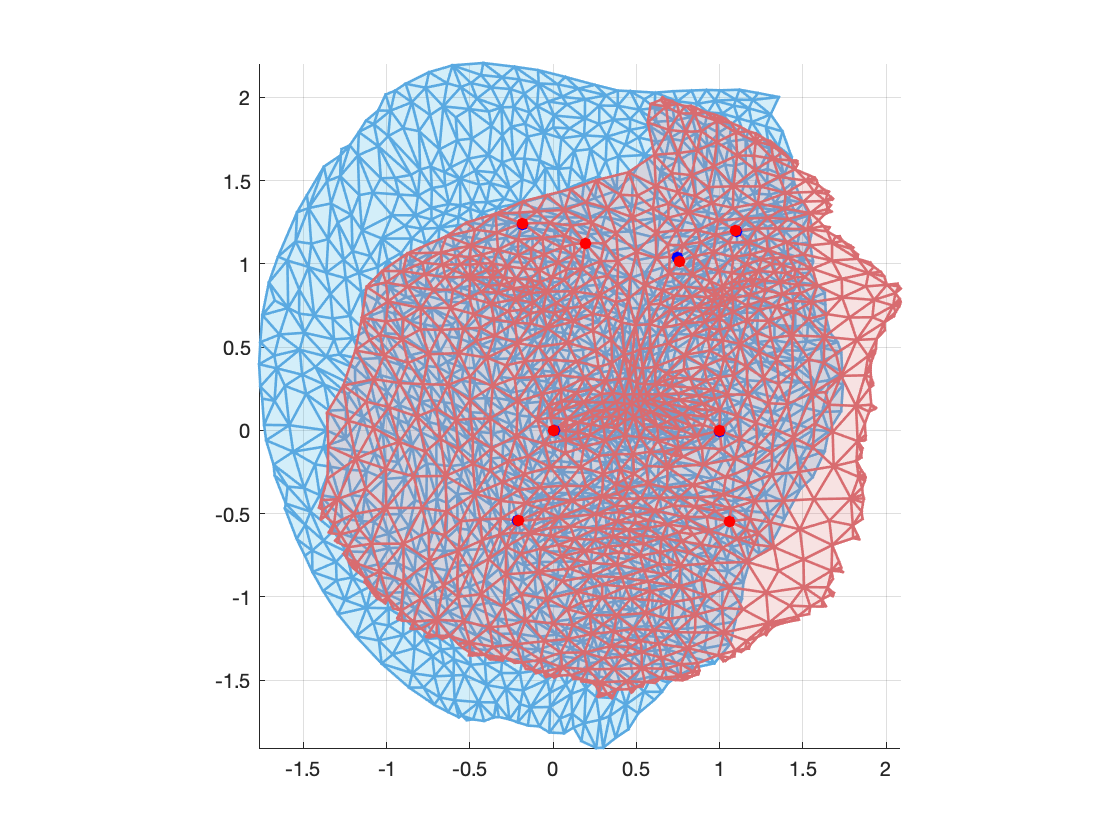}}
     \caption{The first row shows the conformal parametrizations of the moving and static human faces in Example 5 respectively. The landmark correspondences in the 2D parameter domains are also displayed. The registration result in the 2D domain is shown in the second row. The blue mesh is transformed mesh from the moving mesh under the deformation map. The red mesh is the 2D mesh of the target surface under the conformal parametrization.}
     \label{fig:face}
 \end{figure}

\bigskip

\noindent {\bf Example 4}{\it (Registration of partial tooth surfaces)}: Here, we test our registration model on another inconsistent pair of teeth. In Figure \ref{fig:teeth2_data}, the first and second columns show the moving tooth surface and the target static tooth surface respectively. The two surfaces are obviously not bijectively corresponding to each other. Again, our goal is to simultaneously look for the corresponding regions on each surface as well as the registration map between them. The colormaps on each surface are given by their curvatures, which are to be matched using our registration algorithm. As before we first perform conformal flattening \cite{levy2002least} of the two surface meshes into 2D, as shown in the first row in Figure \ref{fig:teeth2}. The landmark correspondences in the 2D parameter domains are also shown. In this example, we have used $\alpha=0.06, \beta = 0.11$, smoothing steps $M_1=1, M_2=5$; bounds $K_1 = 1.3, K_2 = 0.4$; free boundary subproblem for $N_1 = 1$ iterations; intensity subproblem for $1$ iterations and overall iteration $N=50$. The registration result is shown in the last column of Figure \ref{fig:teeth2_data}. It is the registered surface from the moving surface to the target static surface. The colormap on the surface is given by the curvature mismatching error. Note that the mismatching error is small, indicating the curvatures are accurately matched. The registration result in the 2D domain is shown in the second row of Figure \ref{fig:teeth2}. The blue mesh is transformed mesh from the moving mesh under the deformation map. The red mesh is the 2D mesh of the target surface under the conformal parametrization. The intersection region of the two meshes is the region of correspondence amongst the two tooth surfaces. We also display the energy plot against iteration number averaged per face of the mesh on the left of Figure \ref{fig:teeth2_stat}, and the landmark error plot on the left. We can observe that the algorithm successfully reduces the intensity and landmark matching errors.

\bigskip

\noindent {\bf Example 5} {\it (Registration of a pair of human faces)}: In this example, we test our registration method on a pair of inconsistent human faces, which are obtained from FIDENTIS 3D Face Database \cite{Fidentis}. The first and second columns in Figure \ref{fig:face_data} show the moving human face and the target static human face respectively. The colormaps on each surface are given by their mean curvatures. The two surfaces are not bijectively corresponding to each other. As before, we first perform conformal flattening \cite{levy2002least} of the two surface meshes into 2D, as shown in the first row of Figure \ref{fig:face}. The landmark correspondences in the 2D parameter domains are also shown. In this example, we have used $\alpha=0.1, \beta = 0.1$, smoothing steps $M_1=1, M_2=10$; bounds $K_1 = 1.2, K_2 = 0.8$; free boundary subproblem for $N_1 = 5$ iterations; intensity subproblem for $10$ iterations and overall iteration $N=50$. The registration result is shown in the last column of Figure \ref{fig:face_data}. It is the registered surface from the moving surface to the target static surface. The colormap on the surface is given by the curvature mismatching error. Note that the mismatching error is small, indicating the curvatures are accurately matched. The registration result in the 2D domain is shown in the second row of Figure \ref{fig:face}. The blue mesh is transformed mesh from the moving mesh under the deformation map. The red mesh is the 2D mesh of the target surface under the conformal parametrization. The intersection region of the two meshes is the region of correspondence amongst the two human faces. We also display the energy plot against iteration number averaged per face of the mesh on the left of Figure \ref{fig:face_stat}, and the landmark error plot on the left. The intensity registration is more complicated in this case. Nevertheless we can see that the algorithm still successfully reduces the intensity and landmark matching errors. 
 
%   \begin{figure}[t]
%      \centering
     
%      \subfloat{\includegraphics[width=15pt]{figures/colorbar.png}}\hspace{10pt}
%      \subfloat{\includegraphics[width=4.5cm]{figures/face/region002.png}}
%      \caption{Visualization of matching result. On the left it shows the matching error of intensity; on the right it shows the matched region on the moving domain.}
%      \label{fig:face_visual}
%  \end{figure}
\begin{figure}
     \centering
     \subfloat{\includegraphics[height=4.5cm]{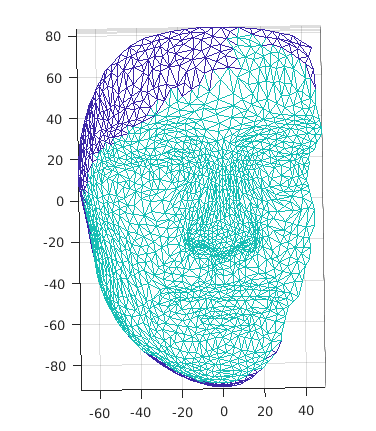}} \hspace{10pt}
     \subfloat{\includegraphics[height=4.5cm]{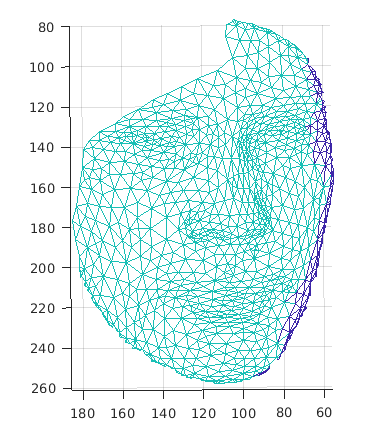}}
     \caption{Corresponding regions on the moving and static human faces in Example 5. The green region on left shows the corresponding region on the moving surface. The green region on the right shows the corresponding region on the target surface.}
     \label{fig:face_corr}
 \end{figure}
 
\begin{figure}
     \centering
     \subfloat{\includegraphics[width=7.2cm]{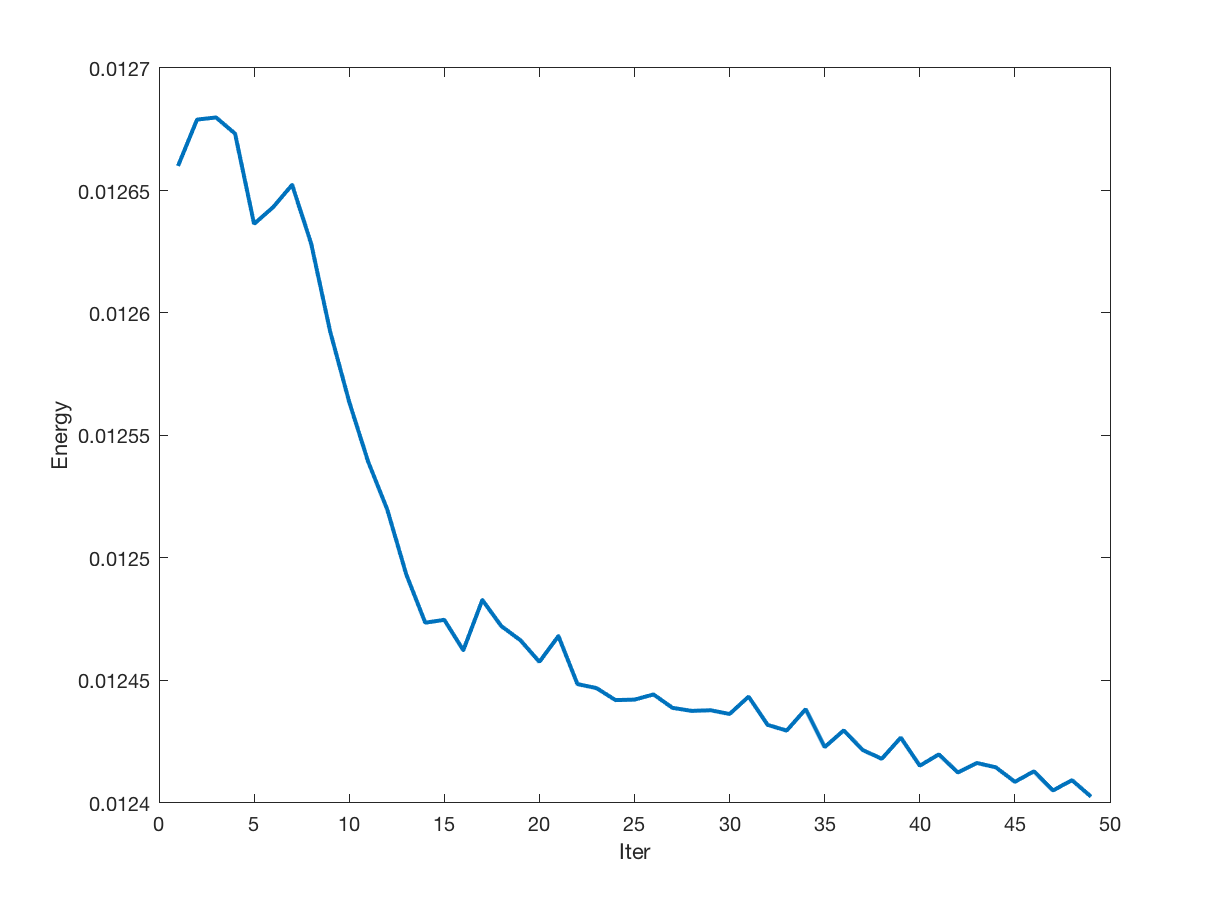}} \hspace{10pt}
     \subfloat{\includegraphics[width=7.2cm]{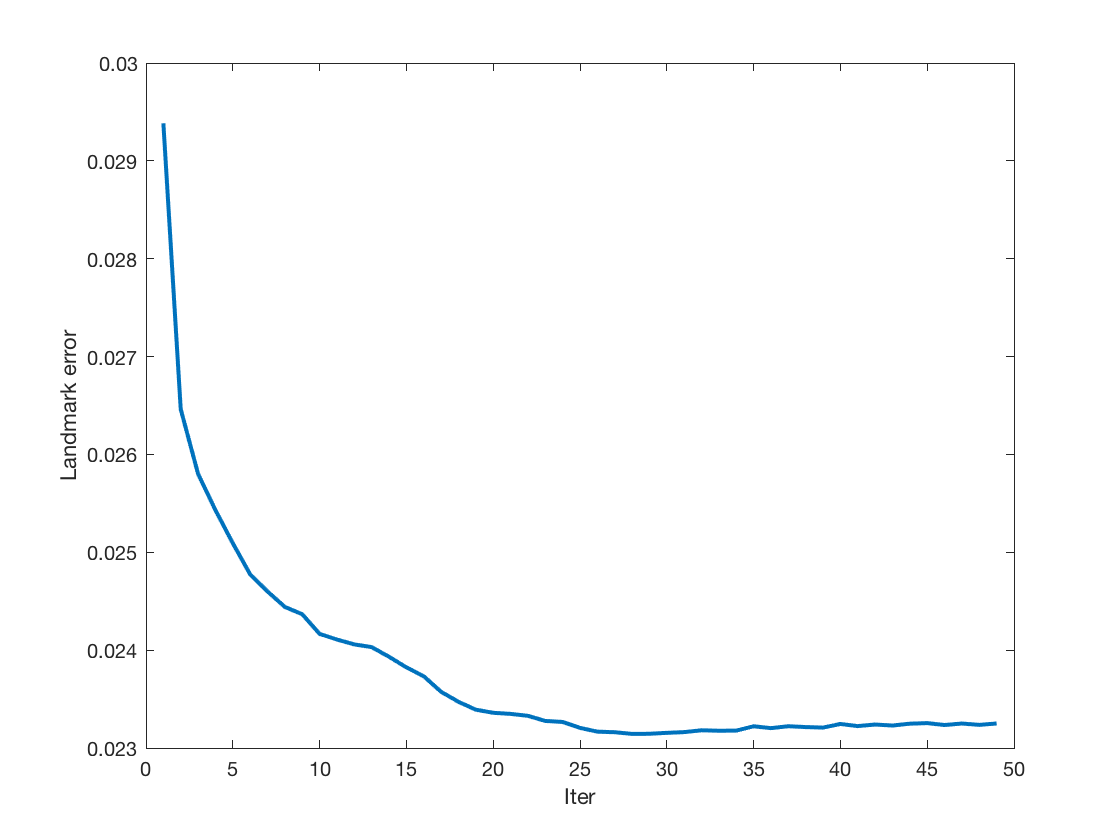}}
     \caption{Energy and landmark error plots for the human face experiment in Example 5 against iteration number. Energy is averaged per face. The left shows the overall energy versus iterations. The right shows the landmark mismatching error versus iterations.}
     \label{fig:face_stat}
 \end{figure}

 \begin{figure}
     \centering
     \subfloat{\includegraphics[height=4.5cm]{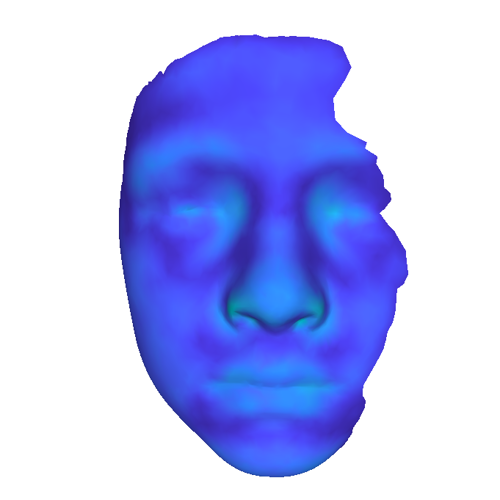}}
     \subfloat{\includegraphics[height=4.5cm]{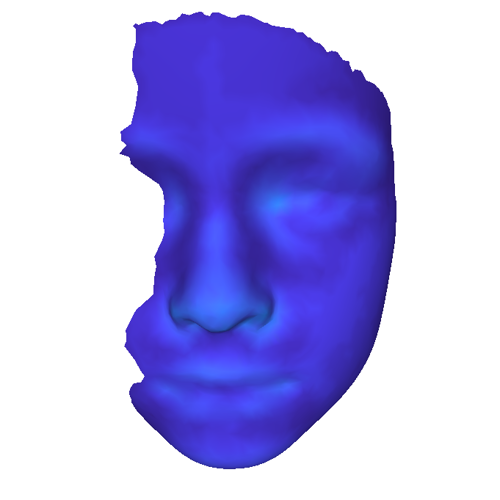}}
     \subfloat{\includegraphics[height=4.5cm]{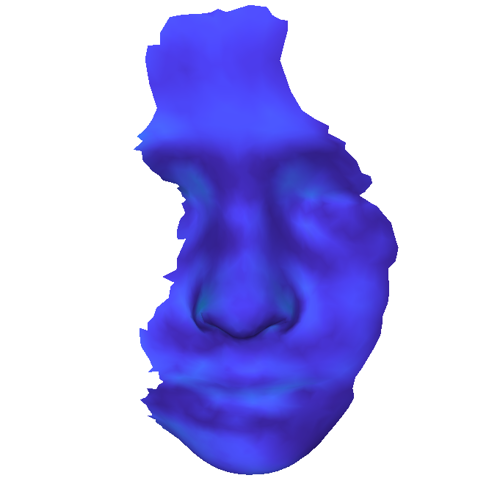}} 
     \subfloat{\includegraphics[width=18pt]{figures/colorbar.png}}
     \caption{Surface registration for another pair of inconsistent human faces in Example 6. The left shows the input moving surface The right shows the target static surface. The right shows the difference of intensities on the registered surface.}
    
     \label{fig:face2_data}
 \end{figure}

 \begin{figure}
     \centering
      \subfloat{\includegraphics[height=5.5cm]{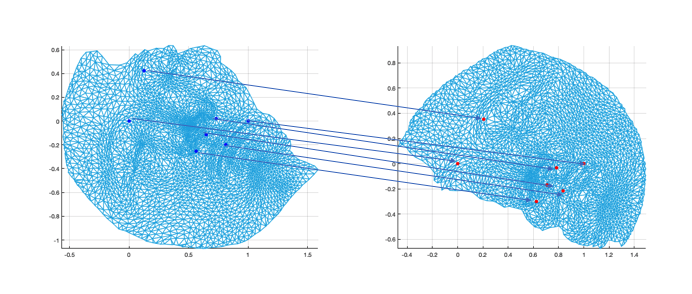}} \\
     \subfloat{\includegraphics[height=5.5cm]{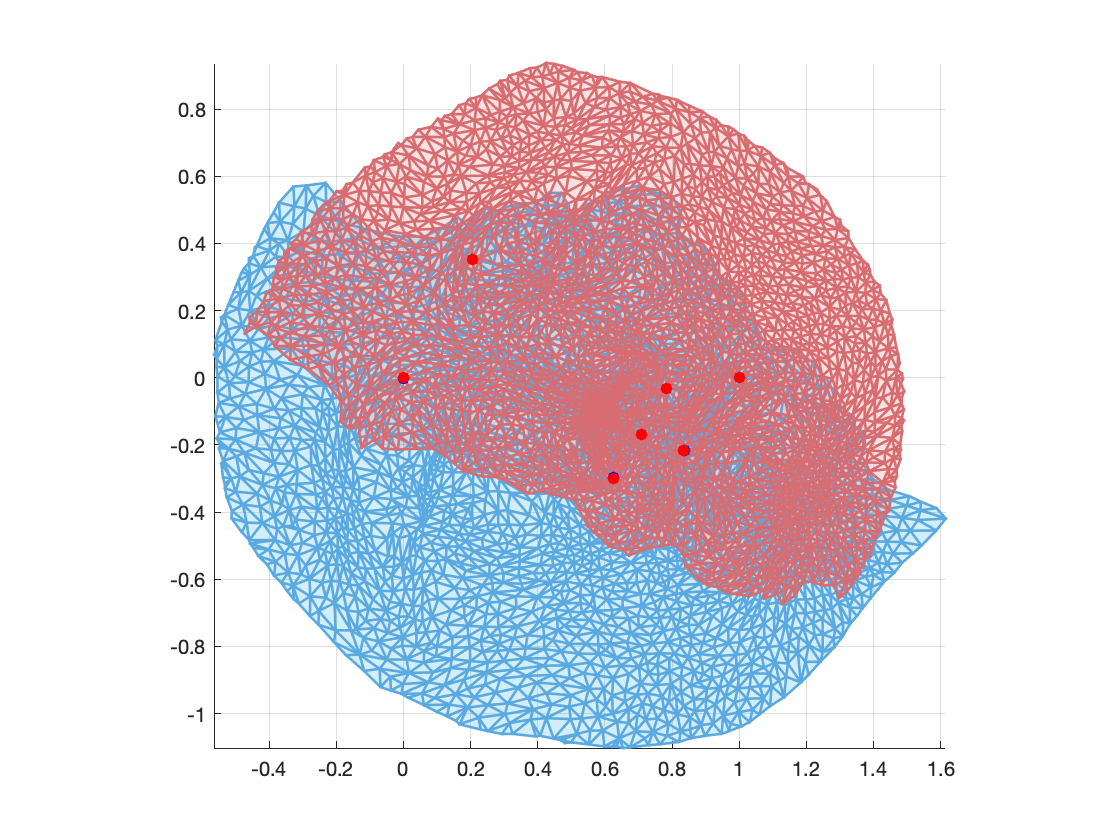}}
     \caption{The first row shows the conformal parametrizations of the moving and static human faces in Example 6 respectively. The landmark correspondences in the 2D parameter domains are also displayed. The registration result in the 2D domain is shown in the second row. The blue mesh is transformed mesh from the moving mesh under the deformation map. The red mesh is the 2D mesh of the target surface under the conformal parametrization.}
     \label{fig:face2}
 \end{figure}
 
 \begin{figure}
     \centering
     \subfloat{\includegraphics[height=4.5cm]{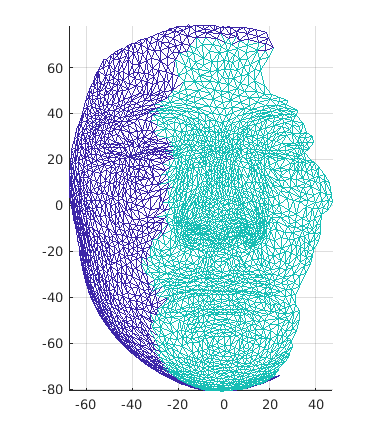}} \hspace{10pt}
     \subfloat{\includegraphics[height=4.5cm]{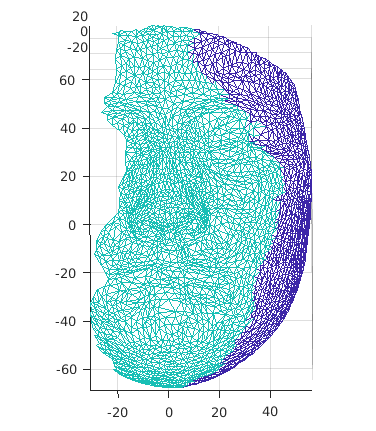}}
     \caption{Corresponding regions on the moving and static human faces in Example 6. The green region on left shows the corresponding region on the moving surface. The green region on the right shows the corresponding region on the target surface.}
     \label{fig:face2_corr}
 \end{figure}

    \begin{figure}
     \centering
     \subfloat{\includegraphics[width=7.2cm]{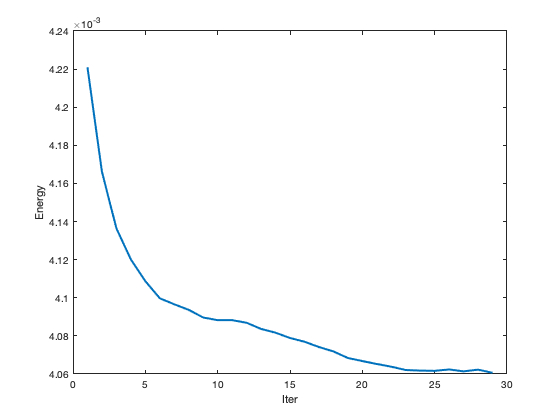}} \hspace{10pt}
     \subfloat{\includegraphics[width=7.2cm]{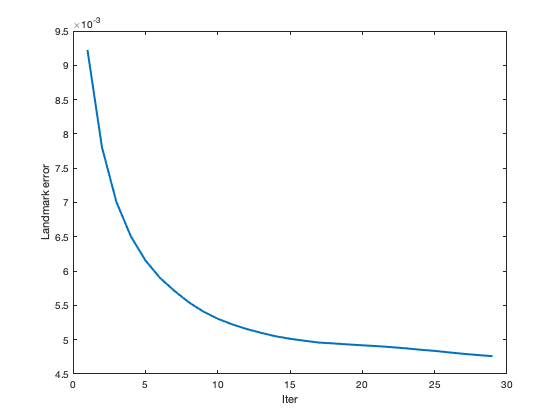}}
     \caption{Energy and landmark error plots for the human face experiment in Example 6 against iteration number. Energy is averaged per face. Energy is averaged per face. The left shows the overall energy versus iterations. The right shows the landmark mismatching error versus iterations.}
     \label{fig:face2_stat}
 \end{figure}
 
\bigskip

\noindent {\bf Example 6:}{\it (Registration of partial human faces)} In this example, we test our registration method on another pair of inconsistent human faces, which are obtained from FIDENTIS 3D Face Database \cite{Fidentis}. The first and second columns in Figure \ref{fig:face2_data} show the moving human face and the target static human face respectively. The matching intensity are shown ont he surfaces. We again perform conformal flattening \cite{levy2002least} of the two surface meshes into 2D, as shown in the first row of Figure \ref{fig:face2}. The landmark correspondences in the 2D parameter domains are also shown. In this example, we have used $\alpha=0.01, \beta = 0.1$, smoothing steps $M_1=1, M_2=5$; bounds $K_1 = 5, K_2 = 0.2$; free boundary subproblem for $N_1 = 5$ iterations; intensity subproblem for $5$ iterations and overall iteration $N=30$. The registration result is shown in the last column of Figure \ref{fig:face2_data}. It is the registered surface from the moving surface to the target static surface. The colormap on the surface is given by the intensity mismatching error. The registration result in the 2D domain is shown in the second row of Figure \ref{fig:face2}. The blue mesh is transformed mesh from the moving mesh under the deformation map. The red mesh is the 2D mesh of the target surface under the conformal parametrization. The intersection region of the two meshes is the region of correspondence amongst the two human faces. We also display the energy plot against iteration number averaged per face of the mesh on the left of Figure \ref{fig:face2_stat}, and the landmark error plot on the left. The intensity registration is more complicated in this case. Nevertheless we can see that the algorithm still successfully reduces the intensity and landmark matching errors. 

\bigskip
\noindent {\bf Example 7:}{\it (Registration of partial genus one vertebrae bones)} In this example, we test our registration method on a pair of inconsistent genus one vertebrae bone surfaces. The first and second columns in Figure \ref{fig:bone_data} show the moving and the target static surfaces respectively. The color-maps on each surface are given by their Gaussian curvatures. We perform conformal flattening and initial registration as in \cite{gu2003global,lui2014geometric} for the two surface meshes so that they share the same fundamental domain in 2D, as shown in the first row of Figure \ref{fig:bone}. The landmark correspondences in the 2D parameter domains are also shown. In this example, we have used $\alpha=0.01, \beta = 0.1$, smoothing steps $M_1=1, M_2=3$; bounds $K_1 = 2, K_2 = 0.5$; free boundary subproblem for $N_1 = 1$ iteration; intensity subproblem for $1$ iteration and overall iteration $N=80$. The registration result is shown in the bottom row of Figure \ref{fig:bone_data}. It is the registered surface from the moving surface to the target static surface. The colormap on the surface is given by the intensity mismatching error. The mismatching error is small, indicating the curvatures are accurately matched. The registration result in the 2D domain is shown in the right column of Figure \ref{fig:bone}. The blue mesh is transformed mesh from the moving mesh under the deformation map. Note that the landmarks are matched almost perfectly. The corresponding regions on both static and moving surfaces are shown in \ref{fig:bone_corr}, where blue color indicates that there is no correspondence. We also display the energy plot against iteration number averaged per face of the mesh on the right of Figure \ref{fig:bone_stat}, and the landmark error plot on the left. We can see that our algorithm successfully reduces the intensity and landmark matching errors in this challenging scenario.

 \begin{figure}
     \centering
     \subfloat{\includegraphics[height=4.5cm]{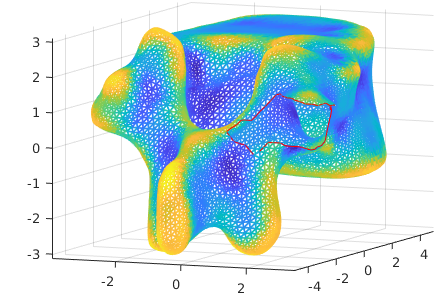}}
     \subfloat{\includegraphics[height=4.5cm]{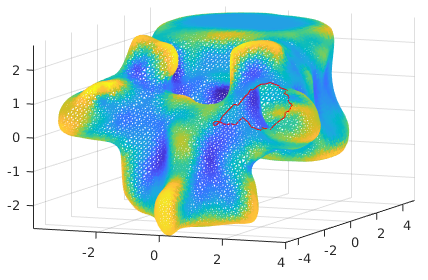}} \\
     \subfloat{\includegraphics[height=5.5cm]{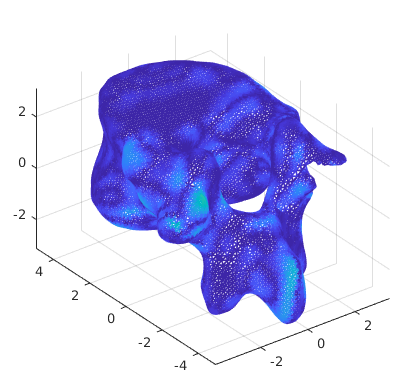}} 
     \subfloat{\includegraphics[width=18pt]{figures/colorbar.png}}
     \caption{Surface registration for a pair of inconsistent vertebrae bone surfaces in Example 7. The left shows the input moving surface. The right shows the target static surface. The colormaps of them are given by their curvatures. The bottom shows the difference of intensities on the registered surface.}
    
     \label{fig:bone_data}
 \end{figure}

 \begin{figure}
     \centering
      \subfloat{\includegraphics[height=5.5cm]{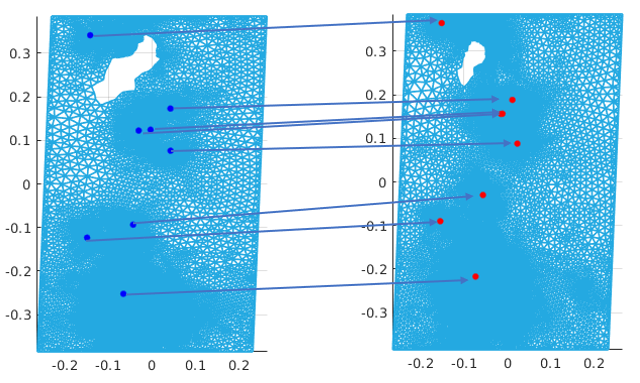}}
     \subfloat{\includegraphics[height=5.5cm]{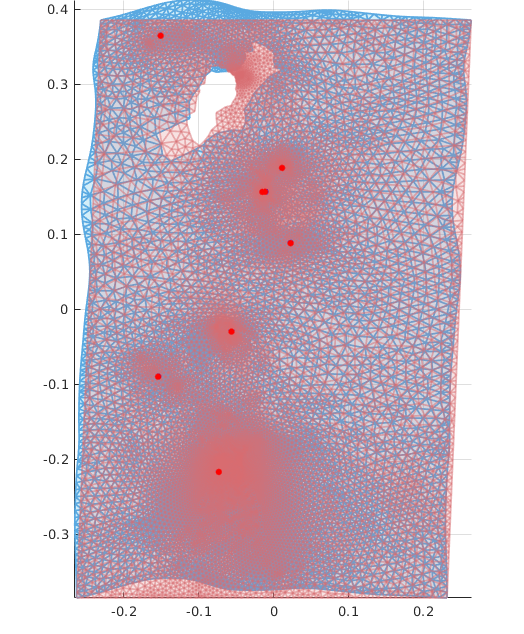}}
    %  \subfloat{\includegraphics[height=5.5cm]{figures/bone/flat_result.png}}
     \caption{One the left we show the conformal parametrizations of the moving and static vertebrae bone surfaces in Example 7. The landmark correspondences in the 2D parameter domains are also displayed. The overlaid view of the registration result in the 2D domain is shown on the right. The blue mesh is transformed mesh from the moving mesh under the deformation map. The red mesh is the 2D mesh of the target surface under the conformal parametrization.}
     \label{fig:bone}
 \end{figure}
 
 \begin{figure}
     \centering
     \subfloat{\includegraphics[height=4.5cm]{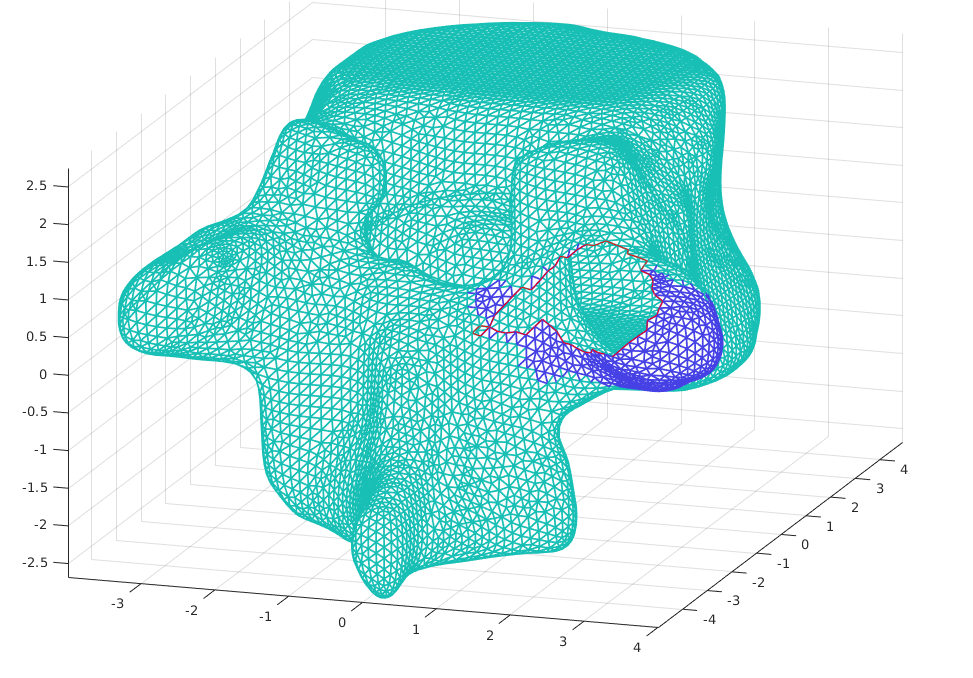}} \hspace{10pt}
     \subfloat{\includegraphics[height=4.5cm]{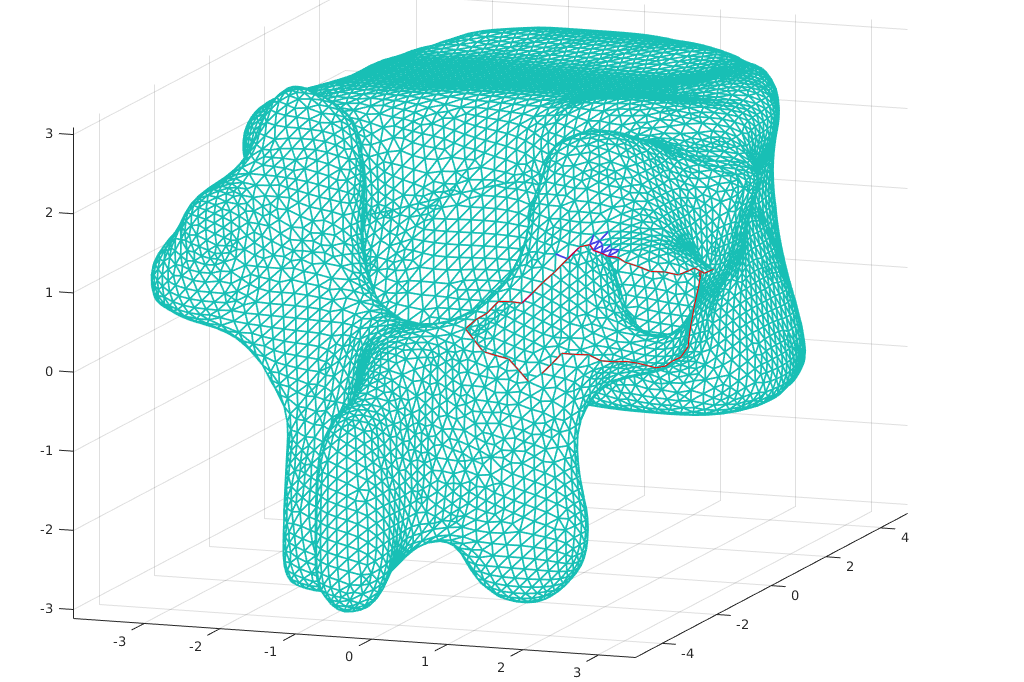}}
     \caption{Corresponding regions on the moving and static human faces in Example 7. The green region on left shows the corresponding region on the moving surface. The green region on the right shows the corresponding region on the target surface.}
     \label{fig:bone_corr}
 \end{figure}

    \begin{figure}
     \centering
     \subfloat{\includegraphics[height=6.5cm]{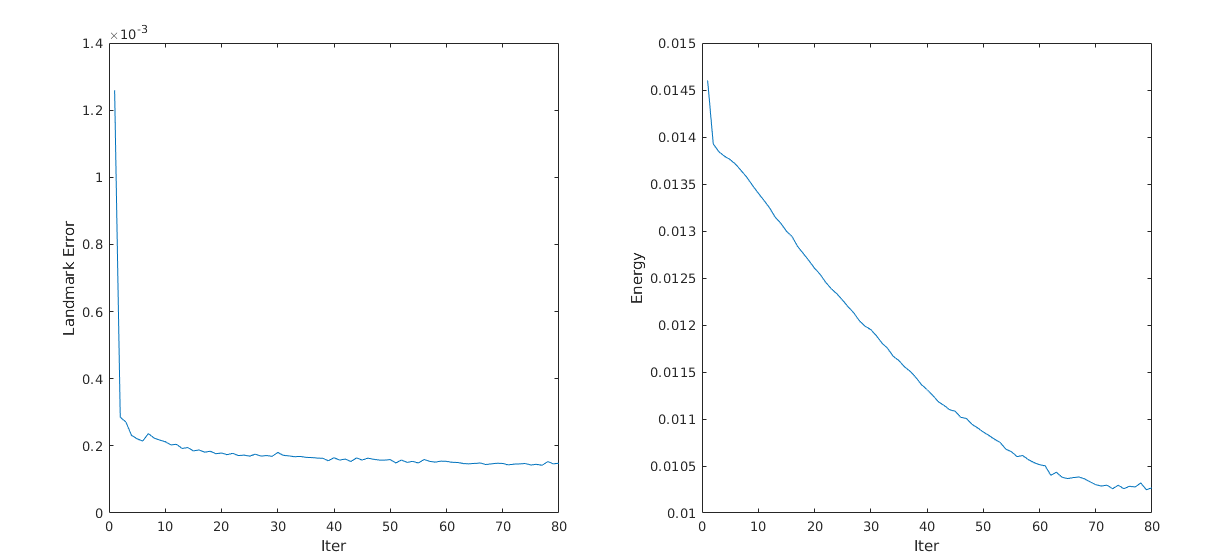}} 
     \caption{Energy and landmark error plots for the vertebrae bone experiment in Example 7. Energy is averaged per face. The left shows the overall energy versus iterations. The right shows the landmark mismatching error versus iterations.}
     \label{fig:bone_stat}
 \end{figure}

 \begin{figure}
     \centering
     \subfloat{\includegraphics[height=4.5cm]{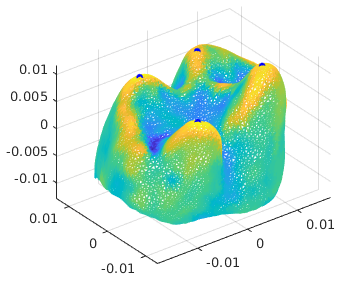}}
     \hspace{-2pt}
     \subfloat{\includegraphics[height=4.5cm]{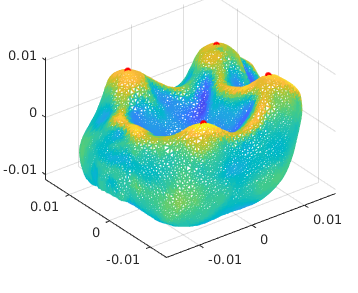}}
     \hspace{-2pt}
     \subfloat{\includegraphics[height=4.5cm]{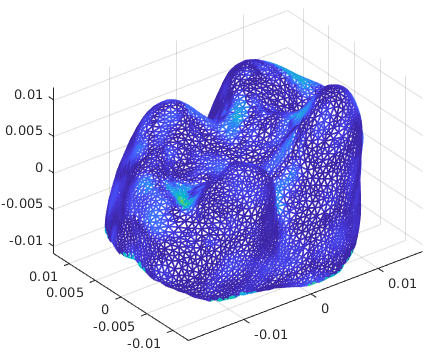}} 
     \subfloat{\includegraphics[width=18pt]{figures/colorbar.png}}
     \caption{Input moving and static  mammalian tooth surfaces for comparison study in Example 8. In the first row the left shows the input moving surface and the right shows the target static surface. The second row shows the difference of intensities on the corresponding region of the moving surface.}
    
     \label{fig:compare_data}
 \end{figure}

 \begin{figure}
     \centering
     \subfloat{\includegraphics[height=4.5cm]{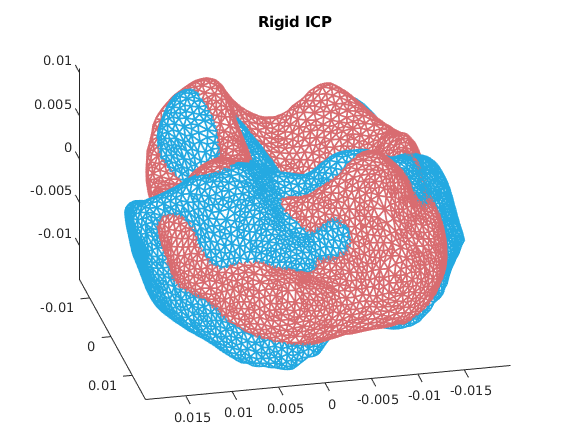}}
     \subfloat{\includegraphics[height=4.5cm]{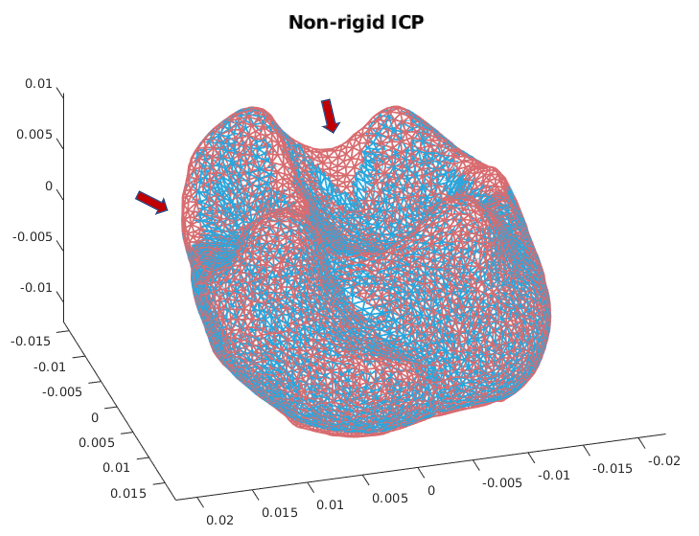}} \\
     \subfloat{\includegraphics[height=4.5cm]{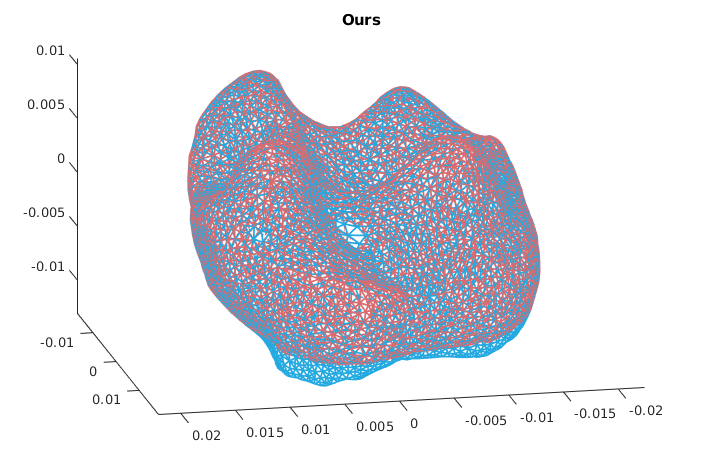}}
     \caption{Comparison on a pair of mammalian tooth surfaces in Example 8. The blue mesh is transformed under the deformation map to match the red mesh. Rigid ICP method can only find a rotation and translation, and hence no correspondence is obtained between the two surfaces. Non-rigid ICP method tries to incorporate deformable motion, but fails to produce correspondence in many places, as indicated by the red arrows. In contrast, our registration method successfully find a diffeomorphic mapping between sub-regions of the moving and static surfaces that minimizes both intensity and landmark mismatch.}
     \label{fig:compare}
 \end{figure}
\bigskip

\noindent {\bf Example 8:}{\it (Comparison with rigid and non-rigid iterative closest point methods)} In this example, we compare our registration method with both rigid and non-rigid iterative closest point (ICP) methods on a pair of mammalian tooth surfaces with initial alignment. The first rows in Figure \ref{fig:compare_data} show the moving and the target static surfaces respectively. The color-maps on each surface are given by their Gaussian curvatures, where the landmark correspondences are also shown. The second row shows the intensity difference after registration using our approach. We can see the intensities are accurately matched.  In this example, we have used $\alpha=0.01, \beta = 0.1$, smoothing steps $M_1=1, M_2=5$; bounds $K_1 = 2, K_2 = 0.5$; free boundary subproblem for $N_1 = 1$ iteration; intensity subproblem for $1$ iteration and overall iteration $N=20$. The registration results of the all three methods are shown in Figure \ref{fig:compare}, where static surface is in red and moving surface is in blue. Conventional rigid ICP method failed because the underlying registration is deformable and more complex than merely rotation and translation. Non-rigid ICP method tries to incorporate deformable motion, but fails to produce correspondence in many places, as indicated by the red arrows. In contrast, our registration method successfully find a diffeomorphic mapping between sub-regions of the moving and static surfaces that minimizes both intensity and landmark mismatch.

\section{Conclusion} \label{sec:7}
We have proposed a deformation model that is able to control both angle and area distortion and allow free boundary movement, which is further developed into a registration algorithm used for domains that are not in any natural global bijective correspondence. The key is to use Beltrami coefficient for smoothness and the mapping differential singular values for free boundary deformation. Experimental results have been given to show the effectiveness of our approach.

\clearpage 

\bibliographystyle{spmpsci} 
\bibliography{reference}
\end{document}